\newcommand{\one}{\mathbbm{1}}
\newcommand{\bsigma}{\bm{\sigma}}
\newcommand{\bmJ}{\bm{J}}
\renewcommand{\P}{\mathbb{P}}
\newcommand{\R}{\mathbb{R}}
\newcommand{\E}{\mathbb{E}}
\newcommand{\sech}{\mathrm{sech}}
\newcommand{\p}{\texttt{priv}}
\newcommand{\vep}{\varepsilon}
\newtheorem{theorem}{Theorem}[section]
\newtheorem{corollary}{Corollary}[section]
\newtheorem{lemma}{Lemma}[section]
\theoremstyle{definition} 
\newtheorem{definition}{Definition}[section]
\begin{document}

% If your paper is accepted and the title of your paper is very long,
% the style will print as headings an error message. Use the following
% command to supply a shorter title of your paper so that it can be
% used as headings.
%
%\runningtitle{I use this title instead because the last one was very long}

% If your paper is accepted and the number of authors is large, the
% style will print as headings an error message. Use the following
% command to supply a shorter version of the authors names so that
% they can be used as headings (for example, use only the surnames)
%
%\runningauthor{Surname 1, Surname 2, Surname 3, ...., Surname n}

\twocolumn[

\aistatstitle{PrIsing: Privacy-Preserving Peer Effect Estimation via Ising Model}

\aistatsauthor{ Abhinav Chakraborty \And Anirban Chatterjee \And  Abhinandan Dalal}
% \aistatsauthor[2]{Anirban Chatterjee \And  Abhinandan Dalal }

\aistatsaddress{ University of Pennsylvania \And  University of Pennsylvania \And University of Pennsylvania } ]

\begin{abstract}
    The Ising model, originally developed as a spin-glass model for ferromagnetic elements, has gained popularity as a network-based model for capturing dependencies in agents' outputs. Its increasing adoption in healthcare and the social sciences has raised privacy concerns regarding the confidentiality of agents' responses. In this paper, we present a novel $(\varepsilon,\delta)$-differentially private algorithm specifically designed to protect the privacy of individual agents' outcomes. Our algorithm allows for precise estimation of the natural parameter using a single network through an objective perturbation technique. Furthermore, we establish regret bounds for this algorithm and assess its performance on synthetic datasets and two real-world networks: one involving HIV status in a social network and the other concerning the political leaning of online blogs.
\end{abstract}

\section{Introduction} \label{introduction}
The ubiquity of data available on interactions between agents in a system has led to several network models being developed to better understand and contemplate agents' responses in an interconnected environment. One such popular model is the Ising spin glass model, which was originally developed in statistical physics to model ferromagnetism. However, it has now gained popularity in applications in several social science domains, due to it's ease of interpretation and widespread applicability. Thomas Schelling's Ising-like model (\cite{schelling1971dynamic}) to explain racial segregation in US cities has become a standard practice in explaining urban dynamics (\cite{fossett2006ethnic}). \cite{stauffer2008social} also discusses how the Ising model can be used to understand language dynamics and the adoption of linguistic features from different languages without external forces, a line of work pioneered in \cite{nettle1999rate}, and later strongly reflected in future literature.

One of the interesting properties of Ising model (Equation (\ref{pmf}), discussed in detail in Section \ref{formulation}), which is a joint distribution on the outcomes of the nodes $\bm{\sigma}=(\sigma_1,\cdots, \sigma_n)\in \{\pm 1\}^n$ given an arbitrarily encoded symmetric network information matrix $\bm{J}_n:= ((\bm J_n(i,j)))$, is that it is easy to infer the influence of the neighboring nodes on the outcome of an individual node. This can be seen from the conditional probability of $\sigma_i = 1$ given the other node outcomes $\bm{\sigma}_{-i}:=(\sigma_{1},\ldots,\sigma_{i-1},\sigma_{i+1},\ldots,\sigma_{n})$,
\begin{align*} \P(\sigma_i = 1|\bm\sigma_{-i}) &= \dfrac{e^{\beta\sum_{j:j\ne i}\sigma_j \bm J_n(i,j)}}{e^{\beta\sum_{j:j\ne i}\sigma_j \bm J_n(i,j)} + e^{-\beta\sum_{j:j\ne i}\sigma_j \bm J_n(i,j)}},
\end{align*}
which increases or decreases in $\beta$ as $\sum_{j\neq i} \sigma_j \bm J_n(i,j)$, is positive or negative, respectively. This parameter $\beta$, often referred to as the inverse temperature in the physics literature, encapsulates the extent of influence of neighbors in the network (see for example, \cite{daskalakis2020logistic}).

However, the applicability of Ising model in social networks comes with its concern in privacy. In fact, \cite{abawajy2016privacy} and \cite{zhou2008brief} discuss a multitude of privacy preservation techniques when presenting network data, particularly with the boom of current network data. The concerns of privacy in social network analysis has indeed been a concern echoed by many (\cite{backstrom2007wherefore, srivastava2008data}), for instance, a powerful adversary with access to others' data might be able to conclude one's outcome from a non-private algorithm, particularly since the outcomes in a network are dependent (\cite{liu2016dependence}). In fact, Ising model in itself has been or can be used to study several sensitive or potentially sensitive data on:
\vspace*{-0.5em}
\begin{itemize}
    \setlength\itemsep{0em}
    \item \textbf{Transmission of contagious diseases:} For instance \cite{mello2021epidemics} study epidemic transmission concepts from Covid-19 using Ising model.
    \item \textbf{Tax evasion dynamics} and peers' influence on such behavior, as studied by \cite{zaklan2009analysing} using Ising model, and further enriched by \cite{pickhardt2014income}, \cite{giraldo2021tax}.
    \item \textbf{Enforcing social behavior} as discussed by \cite{cajueiro2011enforcing} using Ising model for modeling harmful behaviors, like smoking decisions (\cite{krauth2006social}), criminal behavior (\cite{glaeser1996crime}), investment decisions (\cite{duflo2002participation}), etc.
    % \item \textbf{Voting patterns}, studied in the voting of US senators by \cite{lorson2019modeling}.
    \item \textbf{Sexually transmitted diseases}, like that studied by \cite{potterat2002risk}, for HIV transmission in a network based study from Colorado Springs. They also incorporate several sensitive information like drug injection usage of agents involved.
\end{itemize}
Such applications motivate the need for privacy-preserving techniques for analysis. Indeed, quite a few of the papers cited study the model through simulations, as such data is hard to collect and are often unreliable due to the potential untruthful reporting for privacy concerns. From eavesdropping medical and financial agencies, to incriminating evidence, social taboos and voting freedoms; these examples show why privacy is of utmost importance in studying these behaviors, so that truthful data collection can be incentivized and valid inferences can be drawn while ensuring the individuals' privacy.

\subsection{Related Works}

There has been a growing literature for theoretical analysis of the Ising model, and advances have been made in understanding the non-standard estimation techniques in regard to the same. \cite{chatterjee2007estimation} is one of the pioneers in this literature, where he shows the $\sqrt n$ consistency of the maximum pseudo-likelihood estimator. On the other hand, \cite{bhattacharya2018inference} extends this result to $\sqrt{a_n}$-consistency based on conditions of the log partition function, thus completing the result of consistency for all the regimes. We build on these previous works to incorporate the non-statistical constraint of differential privacy, and quantify the loss of efficiency due to the privacy requirement. \cite{mukherjee2022testing} also analyze the difficulty in the estimation of the parameter of the Ising model in certain regimes, and draws parallels with the joint estimation strategies demonstrated in \cite{ghosal2020joint}. Theoretical explorations into distribution testing with Ising Models have been studied in  \cite{daskalakis2019testing}.

In this work, we use techniques from \cite{kifer2012private}. They however deal with independent data structures, which is in stark contrast with the dependent structure of that of the Ising model, thus requiring the necessity for developing new arguments and drawing insights from the Ising literature to prove error bounds on the differentially private estimator.

However, it must be noted that the notion of outcome-differential privacy is different from the usual edge-differential privacy (eg: \cite{mohamed2022differentially}, \cite{chen2023private}, etc.) or node-differential privacy (eg: \cite{kasiviswanathan2013analyzing}, \cite{blocki2013differentially}, etc. ) often considered in network privacy. In the Ising model, the network information incorporated into the $\bm J_n$ matrix is considered non-stochastic, and we are instead interested in the outcomes $\sigma_i\in \{\pm 1\}, \ i\in\{1,\cdots,n\}$ of the nodes. Taking up the tax-evasion example to elucidate, the choice to evade taxes, taken to be binary as $\pm 1$ (which can be affected by peers' choices), are sensitive and hence require privacy guarantees. This would give the respondents plausible deniability against financially criminal behavior, while still allowing the researchers to study the influence of peers in such behavioral models. 

To our knowledge, the work by \cite{zhang2020privately} is the only one discussing differential privacy in Ising models. They focus on keeping the dataset private during both structure learning and parameter estimation from multiple realizations of the results. 

However, their privacy concept differs from ours. %They treat the entire dataset $\{\sigma_i\}_{i=1}^n$ as one unit of data, dealing with scenarios where they observe multiple independent copies of datasets, each containing $n$ sign flips. This might seem a bit unrealistic for practical situations. In our approach, we consider each node's outcome as an individual unit, offering a more practical perspective on privacy preservation.
They adopt a privacy model wherein the collection $\{\sigma_i\}_{i=1}^n$ treated as a singular unit of data. This approach is particularly designed for scenarios involving the observation of multiple independent replicates of datasets, each consisting of $n$ sign flips. In contrast, our approach considers each node's outcome as an individual unit, observing only a single collection of $n$ sign flips. This presents a more individualistic perspective on the preservation of privacy, making our applicability significantly different from theirs.

\textit{Our Contributions} can be summarized as follows:
\vspace*{-1em}
\begin{itemize}
    \setlength\itemsep{0em}
    \item Primarily we study the problem of preserving privacy for node outcomes of a network, in the context of parameter ($\beta$) estimation in an Ising model. This parameter is estimated with a single realization of the network, and can be used to infer about the extent of interference between node outcomes in a network. The problem of preserving node-outcome privacy in a single network data, as far as our knowledge is concerned, has not been studied before.
    \item We prove regret bounds for our algorithm, quantify the cost of privacy and complement the theoretical results with extensive simulation study with Erd\H{o}s-R\'enyi random graphs. 
    \item Finally, we evaluate the performance on two real-world networks-HIV status of individuals in a social network, and political leaning of online blogs that link to one another.
\end{itemize}
\vspace*{-0.5em}
%\textit{Our contribution}: In this paper, we study the problem of preserving privacy for node outcomes of a network, in the context of parameter estimation in an Ising model. This parameter is estimated with a single realization of the network, and can be used to infer about the extent of interference between node outcomes in a network. This method adds to the literature on privacy preservation methods but in a data-dependent environment, the difficulties of that have been studied, for example, in \cite{liu2016dependence}. \textcolor{magenta}{The problem of preserving node-outcome privacy in a single network data, as far as our knowledge is concerned, has not been studied before.} We prove regret bounds for our algorithm and quantify the cost of privacy. Finally, we evaluate the performance of our algorithm in synthetic Erdos-Renyi networks as well as two real-world networks-political leaning of online blogs that link to one another, and HIV status of individuals in a social network. 

The article is organised as follows: Section \ref{introduction} provides an introduction discussing the importance of privacy of node outcomes along-with the current state of the literature, Section \ref{formulation} puts the problem formally in terms of the model and the privacy guarantee being provided, Section \ref{ourmethod} discusses our algorithm and proves privacy and regret guarantees, and Section \ref{numericals} evaluates its performance through numerical experiments and real life data. Finally, Section \ref{discussion} provides closing discussions. All the proofs can be found in the Supplementary Material. 

\section{Problem Formulation} \label{formulation}

Two vectors $\bm \tau,\bm\tau' \in \{\pm 1\}^n$ are said to be adjacent if they differ in at most one coordinate. The notion of differential privacy tries to constraint an algorithm by limiting its output variability for adjacent training input $\bm\tau$ and $\bm\tau'$ (\cite{dwork2014algorithmic}). 

% \textcolor{blue}{Node outcome differential privacy}
\begin{definition} (\cite{dwork2006differential, dwork2006our}) \label{pridef}
    A randomized algorithm $\mathcal{M}$ is said to be node outcome $(\vep,\delta)$-differentially private ($\vep>0$, $\delta\ge 0$) if
    $$\P(\mathcal M(\bm \sigma)\in S)\le e^{\vep}\P(\mathcal M(\bm \sigma') \in S) + \delta$$ for any adjacent vectors $\bm\sigma,\bm\sigma'\in \{\pm 1\}^n$ and all events $S$ in the output space of $\mathcal{M}$. When $\delta = 0$, the algorithm is said to be $\vep$-differentially private.
\end{definition}

Note that in Definition \ref{pridef}, we have not specified anything about the graph information. Indeed, the privacy protection is for the outcomes on the nodes, even when the graph information is perfectly available to an adversary.

Given a non-negative symmetric matrix $\bm J_n\in \R^{n\times n}$ (encapsulating network information) with 0 on its diagonal, the Ising model constitutes assigning a probability distribution on a vector of dependent $\pm 1$ random variables $\bm\sigma = (\sigma_1,\cdots,\sigma_n)$, given by a parametric distributions on $S_n:= \{-1,1\}^n$ given by
\vspace*{0em}
\begin{align} \label{pmf} \P_\beta(\bm\sigma = \bm\tau) = \dfrac 1{2^n}\exp\left(\dfrac 12\beta H_n(\bm\tau) - F_n(\beta)\right);  \end{align} with $\beta\geq 0$, where 
\begin{align} \label{hamiltonian}
    H_n(\bm\tau) = \bm\tau^T\bm J_n\bm \tau = \sum_{1\leq i, j\leq n} \bm J_n(i,j) \tau_i\tau_j; \ \ \bm\tau\in S_n
\end{align}
and $F_n(\beta)$ is the log-partition function determined by the normalizing constraint $\sum_{\bm\tau \in S_n} \P_\beta(\bm\sigma = \bm\tau) = 1$ resulting in the formulation 
\begin{align*}
    F_n(\beta) &:= \log\left[\dfrac{1}{2^n}\sum_{\tau\in S_n} \exp\left(\dfrac 12\beta H_n(\bm\tau)\right)\right] \\ &= \log\E_0\exp\left(\dfrac 12\beta H_n(\bm\sigma)\right)
\end{align*}
where $\E_0$ denotes the expectation over $\bm\sigma$ distributed as $\P_0$ (the uniform measure on $S_n$). %(\ref{pmf}) and (\ref{hamiltonian}) provide the joint distribution  
The parameter $\beta$ , in parallel with the physics literature, is often known as the inverse temperature and captures the strength of dependence in the various entries of $\bm\sigma$. 

A very popular way of estimating $\beta$ is obtaining the maximum pseudo-likelihood estimator (MPLE) $\hat\beta_n(\bm\sigma)$ (\cite{bhattacharya2018inference, chatterjee2007estimation}), given by 
\begin{align}
    \hat\beta_n(\bm\sigma) = \arg\max_\beta \prod_{i=1}^n f_i(\beta,\sigma_i) 
\end{align}
where $f_i(\beta,\sigma_i)$ is the conditional probability density of $\sigma_i$ given $\bm\sigma_{-i}$, under parameter $\beta$.

For any $\bm\tau\in S_n$, defining the function  $L_{\bm\tau}:[0,\infty)\to \R$ as 
\begin{align} 
L_{\bm\tau}(x): = -\dfrac 1n\sum_{i=1}^n m_i(\bm\tau)(\tau_i - \tanh(xm_i(\bm\tau)), \label{def-lsigma}
\end{align} where 
\begin{align}\label{eq:defmitau}
     m_i(\bm\tau):= \sum_{j=1}^n \bm J_n(i,j) \tau_j, 
\end{align}
it can be verified (see for example, \cite{chatterjee2007estimation,bhattacharya2018inference}) that \begin{align} \hat\beta_n(\bm\sigma):=\inf\{x\geq 0: L_{\bm\sigma} (x) = 0\}, \label{MPLE} \end{align} interpreting the infimum of an empty set as $\infty$ as usual, where $\bm\sigma\sim\P_\beta$. Henceforth the dependence on $\sigma$ is suppressed with $\hat\beta_n := \hat\beta_n(\bm\sigma)$ denoting the MPLE of $\beta$, and the function defined in Equation (\ref{def-lsigma}) is referred to as the pseudo log-likelihood function. 
Furthermore, in the following we use the notation $t_{n} = \Theta(s_{n})$ to denote $t_{n} = O(s_{n})$ and $s_{n} = O(t_{n})$. 
Also we say a random variable $X_{n} = O_{p}(t_{n})$ to imply that for any $\vep>0$ there exists $M_{\vep}>0$ such that,
\begin{align*}
    \P\left[|X_{n}/t_{n}|>M_{\vep}\right]\leq \vep\text{ for all large enough }n.
\end{align*}

\section{Our Method} \label{ourmethod}

Our algorithm for private parameter estimation in one-parameter Ising model is given in Algorithm \ref{PrIsing}.

\begin{algorithm}[!ht]
    \caption{Private Estimation in One-parameter Ising Model (\texttt{PrIsing})}
    \begin{algorithmic}
        \Require $\bm\sigma = (\sigma_1,\cdots,\sigma_n)$, privacy parameters $\varepsilon>0, \delta\geq 0$,  symmetric matrix $\bm J_n\in \mathbb{R}^{n\times n}$ with non-negative entries such that $\bm J_n(i,i) = 0 \ \forall 1\leq i\leq n$.\\
        \State Set $m_i(\bm\sigma) = \sum_{j=1}^n \bm J_n(i,j)\sigma_j; \ i=1,\cdots, n$; $L_{\bm\sigma}(\beta) = -\frac 1n\sum_{i=1}^n m_i(\bm\sigma)(\sigma_i- \tanh(\beta m_i(\bm\sigma))$.\\
        \State Set $d_i = n\sum_{j=1}^n \bm J_n(i,j)$ for all $i=1(1)n$.\\
        \State Set $\zeta = \max_{j}\left\{8\dfrac{d_j}{n}\right\}$. \\
        \If{ $\delta>0$}
        \State Sample $b\in\R$ from $\nu(b;\varepsilon,\delta)=\mathcal{N}(0, \gamma^2)$ where $$\gamma = \dfrac{\zeta\sqrt{8\log (2/\delta) + 4\varepsilon}}{\varepsilon}$$\\
        \ElsIf{ $\delta = 0$}\\
        \State Sample $b\in\R$ from $\nu(b;\vep, 0) = \mathrm{Lap}\left(0,2\zeta/\vep\right)$ \\
        \EndIf\\
        \State Set $\Delta \geq \max_{j}\left\{\frac{24}{\varepsilon n}\sum_{i=1}^{n}d_{i}\bm J_{n}(i,j)\right\}$\\
        \State \Return $\hat\beta^{\p} = \inf\{\beta\geq 0: L_{\bm\sigma}(\beta) + \Delta\beta/n + b/n = 0\}$
    \end{algorithmic}
    \label{PrIsing}
\end{algorithm}

Although the non-private estimate is given by the MPLE obtained through equation (\ref{MPLE}), our algorithm builds on the MPLE method by equating the pseudo-likelihood equation not to 0, but to a random noise perturbation, calibrated according to the privacy requirement. The algorithm builds on \cite{kifer2012private} and uses similar proof ideas by bounding the ratio of the gradients of the MPLE equation, and the density of the noise. However, in the former ratio, they could use an identical bound as their data points were i.i.d., whereas due to the dependent structure of the MPLE equation ($m_i(\bm\sigma)$ depends on $\bm\sigma_{-i}$'s), we need to obtain the noise variance calibrated to the global-sensitivity (\cite{dwork2006calibrating}) of the pseudo-loglikelihood function, demonstrated in proof of Theorem \ref{PrIsingthm} in Section \ref{sec:proofofPrIsing}.
\begin{theorem}
    Given any $\epsilon>0$ and $\delta\ge 0$, Algorithm \ref{PrIsing} is $(\varepsilon,\delta)$-differentially private on node-outcome $\bm\sigma$.
    \label{PrIsingthm}
\end{theorem}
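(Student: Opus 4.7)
My plan is to adapt the objective perturbation framework of \cite{kifer2012private} to the dependent-data setting of the Ising pseudo-likelihood. I first note that $\hat\beta^{\p}$ is the unique solution of $F_{\bm\sigma,b}(\beta) := L_{\bm\sigma}(\beta) + \Delta\beta/n + b/n = 0$; this is the first-order condition for the strictly convex (in $\beta$) regularized, noise-perturbed negative pseudo-log-likelihood. Because $L'_{\bm\sigma}(\beta) = \frac{1}{n}\sum_i m_i(\bm\sigma)^2 \sech^2(\beta m_i(\bm\sigma)) \geq 0$, the map $b \mapsto \hat\beta^{\p}$ is a strictly monotone bijection; its inverse is $b(\beta,\bm\sigma) = -nL_{\bm\sigma}(\beta) - \Delta\beta$ with Jacobian $|db/d\beta| = nL'_{\bm\sigma}(\beta) + \Delta$. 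The change-of-variables formula then gives the induced density $p_{\bm\sigma}(\beta) = \nu(b(\beta,\bm\sigma))(nL'_{\bm\sigma}(\beta) + \Delta)$.

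For neighboring $\bm\sigma, \bm\sigma' \in \{\pm 1\}^n$ differing only at coordinate $k$, the privacy claim reduces to showing that the pointwise ratio
\[
\frac{p_{\bm\sigma}(\beta)}{p_{\bm\sigma'}(\beta)} \;=\; \frac{\nu(b(\beta,\bm\sigma))}{\nu(b(\beta,\bm\sigma'))} \cdot \frac{nL'_{\bm\sigma}(\beta) + \Delta}{nL'_{\bm\sigma'}(\beta) + \Delta}
\]
is bounded by $e^\vep$ (up to an additive $\delta$ slack under Gaussian noise). I would split the privacy budget so that each factor contributes at most $e^{\vep/2}$. For the noise factor, I would compute the uniform-in-$\beta$ global sensitivity of $nL_{\bm\sigma}(\beta)$: using $\bm J_n(k,k) = 0$, we have $m_k(\bm\sigma) = m_k(\bm\sigma')$, while for $i \neq k$, $m_i(\bm\sigma') - m_i(\bm\sigma) = -2\sigma_k \bm J_n(i,k)$. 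Combining this with a $\beta$-uniform Lipschitz bound on $m \mapsto m\tanh(\beta m)$ (valid since $|y\sech^2(y)|$ is universally bounded) yields $n|L_{\bm\sigma}(\beta) - L_{\bm\sigma'}(\beta)| \leq 8 d_k/n \leq \zeta$. With this sensitivity, the standard Laplace mechanism argument gives the $e^{\vep/2}$ noise-ratio bound when $\delta = 0$, and the analytic Gaussian mechanism argument with the calibrated $\gamma$ does so up to an additive $\delta$ when $\delta > 0$.

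For the Jacobian factor, using $1 + t \leq e^t$ and $nL'_{\bm\sigma'}(\beta) \geq 0$, the bound reduces to the uniform-in-$\beta$ estimate $|nL'_{\bm\sigma}(\beta) - nL'_{\bm\sigma'}(\beta)| \leq \Delta\vep/2$. Writing $g(m) := m^2 \sech^2(\beta m)$, the main technical obstacle is establishing a $\beta$-uniform Lipschitz estimate $|g'(m)| \leq C|m|$ for some absolute constant $C$; I would obtain this from the identity
\[
g'(m) \;=\; 2m\sech^2(\beta m)\bigl(1 - \beta m\tanh(\beta m)\bigr),
\]
by exploiting the universal boundedness of $y\tanh(y)\sech^2(y)$. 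This cancellation is crucial: naive mean-value estimates leave a residual $\beta$-dependent sensitivity term that no single constant $\Delta$ can absorb. Once the Lipschitz bound is in hand, the mean-value theorem together with $|m_i(\bm\sigma)| \leq d_i/n$ and $|m_i(\bm\sigma) - m_i(\bm\sigma')| \leq 2\bm J_n(i,k)$ yields $|nL'_{\bm\sigma}(\beta) - nL'_{\bm\sigma'}(\beta)| \leq 8 \sum_i d_i \bm J_n(i,k)/n$, which is at most $\Delta\vep/2$ by the algorithm's choice of $\Delta$. Integrating the pointwise ratio bound over an arbitrary event $S$ (or invoking the standard tail-failure argument for Gaussian noise) then produces the $(\vep,\delta)$-DP guarantee.
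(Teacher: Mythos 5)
Your proposal is correct and follows essentially the same route as the paper: a change of variables for the density of $\hat\beta^{\p}$, a split of the budget into a noise-density ratio (controlled by the sensitivity bound $n|L_{\bm\sigma}(\beta)-L_{\bm\sigma'}(\beta)|\le 8d_j/n\le\zeta$, handled by the Laplace/Gaussian tail arguments) and a Jacobian ratio (controlled by $\Delta$ via a $\beta$-uniform Lipschitz estimate and the bounds $|m_i|\le d_i/n$, $|m_i(\bm\sigma)-m_i(\bm\sigma')|\le 2\bm J_n(i,j)$), with the additive $\delta$ recovered by integrating over the Gaussian tail event. The only (cosmetic) difference is that you differentiate $m\mapsto m^2\sech^2(\beta m)$ directly and use $|g'(m)|\lesssim |m|$, whereas the paper factors the difference of squares and applies a Lipschitz bound to $x\mapsto x\sech(\alpha x)$; both hinge on the same uniform boundedness of $y\sech(y)$-type expressions and yield constants compatible with the algorithm's choice of $\Delta$.
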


Next, we quantify the regret bound of our Algorithm \ref{PrIsing}, and quantify the cost of privacy in contrast with the non-private regret bound. Under regularity conditions, \cite{bhattacharya2018inference} shows $\sqrt{a_n}$ consistency of the non-private estimators, where $a_n$ is determined by conditions on the log-partition function. In the following result we adopt a conditions similar to those required for consistency of the non-private estimator and provide the regret bounds attained by $\hat{\beta}^{\p}$ from Algorithm \ref{PrIsing}.

\begin{theorem}[Simpler Version of Theorem \ref{thm:upperbdddetailed}] \label{upperbound}
    Let $\sup_{n\geq 1}\|\bm J_{n}\|<\infty$, and let $\beta_{0}>0$ be fixed. Suppose $\{a_{n}:n\geq 1\}$ is a sequence such that, $a_{n}\rightarrow\infty$ as $n\rightarrow\infty$ and,
    \begin{enumerate}
        \item [(i)] $F_{n}(\beta) = \Theta(a_{n})$ for all $\beta$ in a neighbourhood of $\beta_{0}$,
        \item [(ii)] $\E_{\beta_{0}}\left[\sum_{i=1}^{n}m_{i}(\bsigma)^2\right] = o(a_{n})$,
        \item [(iii)] $\sum_{i=1}^{n}\sum_{j=1}^{n}\bm J_{n}(i,j)^2 = O(a_{n})$.
    \end{enumerate}
    Then, whenever $\Delta$ is chosen to be the smallest permitted by Algorithm \ref{PrIsing}, the estimator $\hat{\beta}^{\p}$ satisfies,
    \begin{align*}
        |\widehat{\beta}^\p - \beta_{0}| &= O_{p}\left({\frac{1}{\sqrt{a_{n}}} + \frac{\lambda_{n}\sqrt{\log(2/\delta)}}{a_{n}\varepsilon}}\right) \text{ if }\delta> 0,
    \end{align*}
    and,
    \begin{align*}
        |\widehat{\beta}^\p - \beta_{0}| &= O_{p}\left({\frac{1}{\sqrt{a_{n}}} + \frac{\lambda_{n}}{a_{n}\varepsilon}}\right) \text{ if }\delta = 0.
    \end{align*}
    where $\lambda_n:= 1\vee \left\|\bm{J}_{n}\right\|_{1\rightarrow\infty}^2,$ with 
    $\|\cdot \|_{1\to\infty}$ denoting the vector induced 1-norm of a matrix.
\end{theorem}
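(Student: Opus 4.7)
The approach is to leverage the monotonicity of the perturbed score equation and then mirror the consistency analysis of \cite{bhattacharya2018inference} for the non-private MPLE, carefully tracking the extra deterministic and stochastic perturbations introduced by privacy. Define
\begin{equation*}
    \tilde L(\beta) := L_{\bsigma}(\beta) + \Delta\beta/n + b/n,
\end{equation*}
so $\hat\beta^{\p}$ is the unique zero of $\tilde L$: indeed, $L'_{\bsigma}(\beta) = n^{-1}\sum_{i=1}^{n} m_i(\bsigma)^2\sech^2(\beta m_i(\bsigma))\ge 0$ and $\Delta/n\ge 0$, so $\tilde L$ is nondecreasing. The mean value theorem yields, for some $\beta^*$ between $\beta_0$ and $\hat\beta^{\p}$,
\begin{equation*}
    \hat\beta^{\p} - \beta_0 = -\frac{\tilde L(\beta_0)}{\tilde L'(\beta^*)},
\end{equation*}
reducing the problem to bounding the numerator in absolute value and the denominator from below in probability.

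For the numerator, the three pieces of $\tilde L(\beta_0) = L_{\bsigma}(\beta_0) + \Delta\beta_0/n + b/n$ are handled separately. Writing $nL_{\bsigma}(\beta_0) = -\sum_i Z_i$ with $Z_i := m_i(\bsigma)(\sigma_i - \tanh(\beta_0 m_i(\bsigma)))$, the conditional Ising law gives $\E[Z_i\mid \bsigma_{-i}]=0$, so a variance computation combined with assumptions (ii) and (iii) yields $|L_{\bsigma}(\beta_0)| = O_p(\sqrt{a_n}/n)$. Inspecting the construction of $\Delta$ in Algorithm \ref{PrIsing} and using the symmetry of $\bmJ_n$, one checks $\Delta = O(\lambda_n/\vep)$, so $\Delta\beta_0/n = O(\lambda_n/(n\vep))$. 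Finally, $\zeta = O(\|\bmJ_n\|_{1\to\infty})$, whence Gaussian and Laplace tail bounds give $|b|/n = O_p(\sqrt{\lambda_n\log(2/\delta)}/(n\vep))$ for $\delta>0$ and $O_p(\sqrt{\lambda_n}/(n\vep))$ for $\delta=0$; since $\lambda_n\ge 1$, each is dominated by $\lambda_n\sqrt{\log(2/\delta)}/(n\vep)$ (and by $\lambda_n/(n\vep)$ respectively).

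For the denominator, which is the crux of the argument, I would show $\tilde L'(\beta^*) \gtrsim a_n/n$ with high probability via a two-step localization. First, a rough consistency bound places $\hat\beta^{\p}$ (and hence $\beta^*$) inside a fixed neighborhood of $\beta_0$ on which assumption (i) is in force; this follows from $\tilde L(\beta_0) = o_p(1)$ together with monotonicity and strict positivity of $L'_{\bsigma}$ on bounded intervals. Second, following the BM18-style argument, I would relate the pseudo-Hessian $L'_{\bsigma}(\beta^*) = n^{-1}\sum_i m_i(\bsigma)^2\sech^2(\beta^* m_i(\bsigma))$ to the log-partition function through the identity $F_n'(\beta) = \tfrac12\E_\beta[H_n(\bsigma)]$ and convexity of $F_n$; condition (i) then forces the quadratic form to be of order $a_n$ on the relevant neighborhood, while $\sup_n\|\bmJ_n\|<\infty$ keeps $\sech^2(\beta^* m_i(\bsigma))$ bounded away from zero, delivering the desired lower bound.

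The main obstacle is exactly this denominator step: the intermediate point $\beta^*$ is random, and both the localization of $\hat\beta^{\p}$ and the in-probability lower bound on the quadratic form $\sum_i m_i(\bsigma)^2$ must be derived simultaneously, while verifying that the additive perturbation $\Delta/n$ plus the noise never overwhelms the curvature so as to destroy monotone invertibility. Once the denominator lower bound is in hand, dividing the numerator bound by $a_n/n$ and using $\lambda_n\ge 1$ produces the two advertised rates for the $\delta>0$ and $\delta=0$ cases respectively.
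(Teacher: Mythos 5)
Your skeleton (perturbed score equation, separate numerator bound, curvature lower bound) is the same as the paper's, and your accounting of the privacy terms is correct: $\vep\Delta=24\|\bmJ_{n}^{2}\|_{1\to\infty}\le 24\|\bmJ_{n}\|_{1\to\infty}^{2}$, $\zeta=8\|\bmJ_{n}\|_{1\to\infty}$, and with $\lambda_{n}\ge1$ these feed into the stated rates once the curvature is shown to be of order $a_{n}/n$. The genuine gap is exactly at that curvature step, which you flag but do not close. First, the identity $F_{n}'(\beta)=\tfrac12\E_{\beta}[H_{n}(\bsigma)]$ together with convexity of $F_{n}$ and condition (i) yields only an expectation lower bound $\E_{\beta_{0}}\big[\sum_{i}m_{i}(\bsigma)^{2}\big]\gtrsim a_{n}$ (via $m_{i}\tanh(\beta m_{i})\le\beta m_{i}^{2}$); what the argument needs is the high-probability statement $\P_{\beta_{0}}\big(\sum_{i}m_{i}(\bsigma)^{2}\one\{|m_{i}(\bsigma)|\le K\}\ge\nu a_{n}\big)\to1$, and passing from the mean to this event is the nontrivial content of Lemma 5.3 of \cite{bhattacharya2018inference}, which the paper invokes directly; your sketch supplies no concentration or truncation argument for it. Second, your claim that $\sup_{n}\|\bmJ_{n}\|<\infty$ keeps $\sech^{2}(\beta^{*}m_{i}(\bsigma))$ bounded away from zero is false in general: $\max_{i}|m_{i}(\bsigma)|$ is controlled by the row-sum norm $\|\bmJ_{n}\|_{1\to\infty}$, not by the spectral norm, and the theorem explicitly allows $\|\bmJ_{n}\|_{1\to\infty}\to\infty$ (this is precisely why $\lambda_{n}$ enters the rate). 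The truncation $\one\{|m_{i}|\le K\}$ is what rescues the $\sech^{2}$ factor, and even then only for $\beta$ in a bounded range, which brings back the localization problem. (A smaller point of the same kind: your $Z_{i}$ are conditionally centered but not uncorrelated, so ``a variance computation'' really means the cross-covariance analysis of Lemma 5.2 of \cite{bhattacharya2018inference}.)

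The chicken-and-egg between localizing $\beta^{*}$ and lower-bounding the curvature is also where the paper departs from your mean-value-theorem route. Rather than evaluating the derivative at a random intermediate point, it lower-bounds $\mathcal{L}_{\bsigma}'(\beta,b)\ge\nu\frac{a_{n}}{n}\sech^{2}(\beta K_{2})+\frac{\Delta}{n}$ for all $\beta\ge0$ on the good event $T_{n}$, integrates this bound over the interval between $\beta_{0}$ and $\hat{\beta}^{\p}$ to obtain $\frac{K_{1}}{s_{n,\delta}}\ge\big|\nu\frac{a_{n}}{K_{2}n}\{\tanh(K_{2}\hat{\beta}^{\p})-\tanh(K_{2}\beta_{0})\}+\frac{\Delta}{n}(\hat{\beta}^{\p}-\beta_{0})\big|$, and then inverts this monotone comparison via the dedicated Lemma \ref{lemma:tanhOp1}, which absorbs the saturation of $\tanh$ and delivers localization and rate simultaneously. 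Your rough-consistency step, justified only by ``monotonicity and strict positivity of $L_{\bsigma}'$ on bounded intervals,'' does not work as stated, because $L_{\bsigma}'(\beta)\asymp a_{n}/n$ can itself tend to zero (e.g.\ $a_{n}=1/p_{n}\ll n$ for sparse Erd\H{o}s--R\'enyi graphs in the high-temperature regime); you need the quantitative bound $\nu a_{n}\sech^{2}(BK_{2})/n$ on a fixed interval, i.e.\ the same Lemma 5.3 input, before comparison with $|\tilde L(\beta_{0})|=O_{p}(\sqrt{a_{n}}/n)$ can exclude $\hat{\beta}^{\p}$ leaving a neighborhood of $\beta_{0}$. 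The plan is repairable---prove or cite the truncated quadratic-form bound, carry the truncation through the $\sech^{2}$ terms, do the localization on a fixed interval first, then apply the mean value theorem---but as written the crux step is missing.
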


The first term in the rate, denoted as $1/\sqrt{a_n}$, represents the non-private rate inherent in the problem. The additional term $\frac{\lambda_n}{a_n \varepsilon}$ accounts for the privacy cost introduced by differential privacy (DP) constraints. This is in line with the privacy literature, where it is widely observed that the privacy cost manifests as a higher-order term, and its dependence on $n$ (in our case, $a_n$) is quadratic in nature (as discussed in, for example, \cite{acharya2021differentially}).

However, It is important to note that \cite{acharya2021differentially} deal with problems exhibiting an independent and identically distributed (i.i.d.) structure. Therefore, while drawing analogies, one should consider this context carefully. We anticipate that in the presence of dependence structures among observations, privacy is compromised to a greater extent (\cite{liu2016dependence}). This compromise is quantified by the parameter $\lambda_n$, where larger values of $\lambda_n$ signify an exacerbation of the privacy cost.

\subsection{Examples}  

In this section we consider specific examples of the underlying network to quantify the regret bound obtained by the private estimator $\hat\beta^{\p}$ and contrast with the corresponding non-private estimator.

\subsubsection{Degree Regular Graphs}
For Ising Models on degree regular graphs $G_{n}$, the matrix $\bmJ_{n}$ in \eqref{pmf} becomes $\bmJ_{n} = \bm{A}_{n}/D_{n}$ where $\bm{A}_{n}$ is the adjacency matrix of the graph $G_{n}$ and $D_{n}$ is the degree. This encompasses Ising models on complete graphs, random regular graphs and lattices which have been comprehensively investigated in probability and statistical physics (see \cite{dembo2010ising,levin2010glauber}). For such $\bmJ_{n}$ the parameter $\lambda_{n}= 1$, and hence by Theorem \ref{upperbound} and Corollary 3.1 from \cite{bhattacharya2018inference} we have the following result.

\begin{corollary}\label{cor:degree-regular-rates}
    Fix $\beta_{0}>0$. Then for any sequence of $D_{n}$ regular graphs $G_{n}$,
    \begin{align*}
        \left|\hat{\beta}^{\p} - \beta_{0}\right| = 
        \begin{cases}
            O_{p}\left(\sqrt{\frac{D_{n}}{n}} + \frac{D_{n}}{n\vep}\eta_{\delta}\right) & 0<\beta_{0}<1\\
            O_{p}\left(\frac{1}{\sqrt{n}} + \frac{1}{n\vep}\eta_{\delta}\right) & \beta_{0}>1
        \end{cases}
    \end{align*}
    where $\eta_{\delta} = \log(2/\delta)$ if $\delta>0$, and $\eta_{\delta} = 1$ if $\delta = 0$.
\end{corollary}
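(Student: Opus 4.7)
The plan is to specialize Theorem \ref{upperbound} to the $D_n$-regular setting $\bm{J}_n = \bm{A}_n/D_n$ by identifying the appropriate sequence $a_n$ in each temperature regime and by computing $\lambda_n$. The constant $\lambda_n$ is immediate: every row of $\bm{J}_n$ has exactly $D_n$ entries equal to $1/D_n$ with the rest zero, so $\|\bm{J}_n\|_{1\to\infty} = 1$ and hence $\lambda_n = 1\vee 1 = 1$. The uniform operator-norm assumption $\sup_n \|\bm{J}_n\| < \infty$ of Theorem \ref{upperbound} is also automatic because the normalized adjacency of a $D_n$-regular graph has spectral radius $1$. With $\lambda_n=1$, the privacy term in Theorem \ref{upperbound} reduces to $\sqrt{\log(2/\delta)}/(a_n\varepsilon)$ when $\delta>0$ and $1/(a_n\varepsilon)$ when $\delta=0$, matching the $\eta_\delta/(a_n\varepsilon)$ in the corollary up to the $\sqrt{\cdot}$ factor absorbed in $O_p(\cdot)$.

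I would next invoke Corollary 3.1 of \cite{bhattacharya2018inference} to pin down $a_n$ in each regime. Their analysis establishes, for degree-regular graphs, that $F_n(\beta) = \Theta(n/D_n)$ uniformly in a neighborhood of any fixed $\beta_0\in(0,1)$, and $F_n(\beta) = \Theta(n)$ uniformly in a neighborhood of any fixed $\beta_0>1$. This is precisely condition (i) of Theorem \ref{upperbound} with $a_n = n/D_n$ in the high-temperature regime and $a_n = n$ in the low-temperature regime. Condition (iii) is a direct count: $\sum_{i,j}\bm{J}_n(i,j)^2 = n D_n\cdot(1/D_n)^2 = n/D_n$, which is $\Theta(a_n)$ in the high-temperature case and $O(n)=O(a_n)$ in the low-temperature case since $D_n\geq 1$.

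The main obstacle is condition (ii), namely $\E_{\beta_0}\!\left[\sum_i m_i(\bsigma)^2\right] = o(a_n)$. This is a moment estimate under the Ising measure and is where the genuine probabilistic content lives; it is exactly the ingredient used by \cite{bhattacharya2018inference} to prove consistency of the MPLE in each regime, obtained from careful analysis of magnetization fluctuations using the variational representation of $F_n$. Rather than reprove this, I would cite the corresponding bound from their work, which yields the required $o(a_n)$ control on $\sum_i m_i(\bsigma)^2$ separately in the high- and low-temperature phases.

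Plugging $\lambda_n = 1$ together with the respective $a_n$ into Theorem \ref{upperbound} then immediately produces
\[
  \bigl|\hat\beta^{\p} - \beta_0\bigr| = O_p\!\left(\sqrt{D_n/n} + \tfrac{D_n}{n\varepsilon}\,\eta_\delta\right)
\]
in the high-temperature regime $0<\beta_0<1$, and
\[
  \bigl|\hat\beta^{\p} - \beta_0\bigr| = O_p\!\left(1/\sqrt{n} + \tfrac{1}{n\varepsilon}\,\eta_\delta\right)
\]
in the low-temperature regime $\beta_0>1$, which is exactly the statement of Corollary \ref{cor:degree-regular-rates}.
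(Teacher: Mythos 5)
Your proposal is correct and follows essentially the same route as the paper, which likewise computes $\lambda_n=1$ for $\bm J_n=\bm A_n/D_n$ and combines Theorem \ref{upperbound} with Corollary 3.1 of \cite{bhattacharya2018inference} to supply $a_n=n/D_n$ for $0<\beta_0<1$ and $a_n=n$ for $\beta_0>1$. The only difference is that you spell out the verification of conditions (i)--(iii) (and correctly note that $\sqrt{\log(2/\delta)}\le\log(2/\delta)=\eta_\delta$ absorbs the discrepancy in the privacy term), whereas the paper delegates these checks to the cited corollary.
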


\subsubsection{Erd\H{o}s-R\'enyi Graphs}
Consider $G_{n}$ to be a sequence of Erd\H{o}s-R\'enyi random graphs on $n$ vertices with edge probability $p_{n}$. For Ising Models with such an underlying network structure the matrix $\bmJ_{n}$ from \eqref{pmf} is taken as $\bm A_{n}/np_{n}$, where $\bm A_{n}$ is the adjacency matrix of the network $G_{n}$. It is easy to infer that for large enough $n$ the parameter $\lambda_{n}\leq 2$ with high probability. Combining Theorem \ref{upperbound} and Corollary 3.2 from \cite{bhattacharya2018inference} we have the following result.
\begin{corollary}\label{cor:erdos-rates}
    Fix $\beta_{0}>0$. Then for a sequence of Erd\H{o}s-R\'enyi random graphs $G_{n}$ with edge probability $\frac{\log n}{n}\ll p_{n}\leq 1$,
    \begin{align*}
        \left|\hat{\beta}^{\p} - \beta_{0}\right| = 
        \begin{cases}
            O_{p}\left(\sqrt{p_{n}} + \frac{p_{n}}{\vep}\eta_{\delta}\right) & 0<\beta_{0}<1\\
            O_{p}\left(\frac{1}{\sqrt{n}} + \frac{1}{n\vep}\eta_{\delta}\right) & \beta_{0}>1
        \end{cases}
    \end{align*}
    where $\eta_{\delta} = \log(2/\delta)$ if $\delta>0$, and $\eta_{\delta} = 1$ if $\delta = 0$.
\end{corollary}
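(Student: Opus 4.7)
The plan is to derive Corollary \ref{cor:erdos-rates} as a direct specialization of Theorem \ref{upperbound} to Erdős–Rényi graphs, pulling the regime-specific sequence $a_n$ from Corollary 3.2 of \cite{bhattacharya2018inference}. The only Erdős–Rényi–specific work is (a) establishing the norm bounds needed to invoke Theorem \ref{upperbound}, and (b) simplifying its conclusion by substituting the explicit values of $\lambda_n$ and $a_n$ into the two rate expressions.

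For (a), write $\bm J_n = \bm A_n/(np_n)$. Every entry of $\bm J_n$ lies in $\{0, 1/(np_n)\}$, so the vector-induced norm satisfies $\|\bm J_n\|_{1\to\infty} = \max_{i,j} |\bm J_n(i,j)| \le 1/(np_n)$. Because $p_n \gg \log n/n$ forces $np_n \to \infty$, we obtain $\lambda_n = 1 \vee \|\bm J_n\|_{1\to\infty}^2 \le 2$ for all $n$ large, which is precisely the aside stated just before the corollary. For the operator-norm hypothesis $\sup_{n}\|\bm J_n\| < \infty$, standard spectral concentration for sparse Erdős–Rényi adjacency matrices in the regime $np_n \gg \log n$ yields $\|\bm A_n\| = np_n(1+o(1))$ with high probability, so $\|\bm J_n\| = O(1)$ on a high-probability event; since the conclusion of Theorem \ref{upperbound} is an $O_p$ statement, restricting attention to this event loses nothing.

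For (b), Corollary 3.2 of \cite{bhattacharya2018inference} verifies the three hypotheses of Theorem \ref{upperbound} with $a_n = 1/p_n$ in the high-temperature regime $\beta_0 \in (0,1)$ and with $a_n = n$ in the low-temperature regime $\beta_0 > 1$. Plugging these $a_n$ and the bound $\lambda_n \le 2$ into Theorem \ref{upperbound}, the non-private term $1/\sqrt{a_n}$ becomes $\sqrt{p_n}$ and $1/\sqrt{n}$ respectively, while the privacy term $\lambda_n\sqrt{\log(2/\delta)}/(a_n\vep)$ collapses to $(p_n/\vep)\eta_\delta$ and $(1/(n\vep))\eta_\delta$ after absorbing the $\sqrt{\log(2/\delta)}$ factor into $\eta_\delta$.

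The principal difficulty is entirely upstream: hypothesis (ii) of Theorem \ref{upperbound}, $\E_{\beta_0}[\sum_i m_i(\bm\sigma)^2] = o(a_n)$, is delicate in the high-temperature regime, since $\E_{\beta_0}[m_i(\bm\sigma)^2]$ is heuristically of order $1/(np_n)$, so $\E_{\beta_0}[\sum_i m_i(\bm\sigma)^2]$ is of the same order as $a_n = 1/p_n$ and the required $o$-statement depends on sharp cancellations driven by the weak correlation structure of the Erdős–Rényi Ising measure for $\beta_0 < 1$. That control is carried out via spin-exchange and cavity-type estimates in the proof of Corollary 3.2 of \cite{bhattacharya2018inference}, which I would invoke as a black box rather than reprove here.
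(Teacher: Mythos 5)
Your overall route is the same as the paper's: the paper derives Corollary \ref{cor:erdos-rates} exactly by combining Theorem \ref{upperbound} (really its detailed form, Theorem \ref{thm:upperbdddetailed}) with Corollary 3.2 of \cite{bhattacharya2018inference}, together with the observation that $\lambda_n\le 2$ with high probability, and your substitution of $a_n=1/p_n$ ($\beta_0<1$), $a_n=n$ ($\beta_0>1$) and the absorption of $\sqrt{\log(2/\delta)}$ into $\eta_\delta$ match that. However, two of your supporting claims are wrong as written. First, $\|\cdot\|_{1\to\infty}$ in Theorem \ref{upperbound} is the induced $1$-norm, i.e.\ the maximum column (equivalently, row) sum of $\bmJ_n$ — see \eqref{eq:formzeta}, where $\|\bmJ_n\|_{1\to\infty}=\max_j\sum_i\bmJ_n(i,j)$ — not the entrywise maximum $\max_{i,j}|\bmJ_n(i,j)|$ that you bound by $1/(np_n)$. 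For $\bmJ_n=\bm A_n/(np_n)$ the relevant quantity is $\max_i d_i(G_n)/(np_n)$, which is close to $1$ only after a degree-concentration (Chernoff) argument using $np_n\gg\log n$; this is why the paper's bound $\lambda_n\le 2$ is a high-probability statement rather than the deterministic one you assert. The final rate is unaffected, but your verification bounds a different norm than the one the theorem uses.

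Second, your resolution of the ``principal difficulty'' is based on a claim that is not true and is not in the cited reference: $\E_{\beta_0}\left[\sum_i m_i(\bsigma)^2\right]$ is of order $a_n$ for Erd\H{o}s--R\'enyi graphs (already for independent spins it equals $\sum_{i,j}\bmJ_n(i,j)^2\asymp 1/p_n$, and ferromagnetic correlations only increase it), so there are no ``sharp cancellations'' making it $o(a_n)$, and Corollary 3.2 of \cite{bhattacharya2018inference} proves no such statement. The way the argument actually closes is via the detailed Theorem \ref{thm:upperbdddetailed}, whose hypothesis (i) is the truncated condition $\limsup_K\limsup_n u_{n,K}/a_n=0$ with $u_{n,K}=\E_{\beta_0}[\sum_i|m_i(\bsigma)|\one\{|m_i(\bsigma)|>K\}]$; this holds trivially here because $|m_i(\bsigma)|\le \max_i d_i(G_n)/(np_n)\le 2$ with high probability (equivalently, because $u_{n,K}\le \E[\sum_i m_i^2]/K = O(a_n)/K$), i.e.\ an $O(a_n)$ second-moment bound suffices, not an $o(a_n)$ one. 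So you should either invoke Theorem \ref{thm:upperbdddetailed} directly and verify its truncated condition by the boundedness of the $m_i$'s, or note that the conditions checked in Corollary 3.2 of \cite{bhattacharya2018inference} are precisely those of the detailed theorem; as stated, your black-box citation covers a claim the cited result does not contain.
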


In Corollary \ref{cor:degree-regular-rates} and \ref{cor:erdos-rates}, we observe similar phase transition behavior as in non-private scenarios, with a distinct change in the rate of convergence occurring at $\beta_0 = 1$. Specifically, when $\beta_0 < 1$, we witness a phenomenon reminiscent of mean estimation problems, where the observed rate follows $O_p(\sqrt{\frac{d}{n}} + \frac{d}{n\varepsilon})$ (as discussed in \cite{cai2021cost}), where $d$ represents the dimensionality of the problem. In these regimes, $D_n$ and $np_{n}$ signifies the intrinsic dimensionality of our problem. Notably, the cost of privacy becomes more pronounced for larger values of $D_n$ and $np_{n}$, indicating a higher degree of dependence in our model. However, intriguingly, this intensification of the privacy cost diminishes in the high-dependence regime $\beta_0 > 1$. Here, the cost of privacy no longer exhibits dependency on the graph's intrinsic dimensionality, paralleling the non-private rate.

\section{Numerical Experiments} \label{numericals}

To complement the regret guarantees in Section \ref{ourmethod}, we perform numerical experiments to evaluate the performance of our method. We conduct a set of simulations on Erd\H{o}s-R\'enyi graphs and on two real datasets- HIV status of individuals in a social network, and political leaning of online blogs, repeating the simulation 500 times to plot the results. The Erd\H{o}s-R\'enyi simulations show a regime change in the estimation rate of $\beta$, a phenomenon also seen in the nonprivate setting \cite{bhattacharya2018inference}. On the other hand, the real network experiments show that the validity of the method is upheld in realistic networks as well.

\subsection{Experiments on Erd\H{o}s-R\'enyi graphs}

In this section, we rigorously validate our theory through a comprehensive simulation study. We begin by providing a detailed description of the simulation setup.

\subsubsection*{Simulation Setup}

We aim to estimate the parameter $\beta$ based on a dataset consisting of $n$ observations. These observations are generated from an Ising model, where the underlying dependency graph $G_n$ is created using the Erdos-Renyi model with a designated parameter $p_n$. $\bm J_n$ is taken to be $\bm A(G_n)/np_n$, where $\bm A(G)$ is taken to be the adjacency matrix of a network $G$, ie, the corresponding $H$ from \eqref{hamiltonian} becomes
\vspace*{-0.5em}
$$H(\bm\tau) = \dfrac{1}{np_n}\bm\tau^T \bm A(G_n)\bm\tau.\vspace{-0.5em}$$
Our investigation into the performance of our proposed estimator encompasses three primary simulation studies, as discussed below.

\paragraph{1. Impact of $\beta$ on Our Estimator}
We want to compare the performance of our $\hat\beta^\p$ with $\hat\beta_n$ over a range of true $\beta$. We set $n = 2000$, $\delta = 1/n$ and $\vep = 5$ for the comparison, and $p_n$ is chosen to be $n^{-\frac{1}{3}}$. 

\begin{figure}[!ht]
    \centering
    \includegraphics[scale = 0.7]{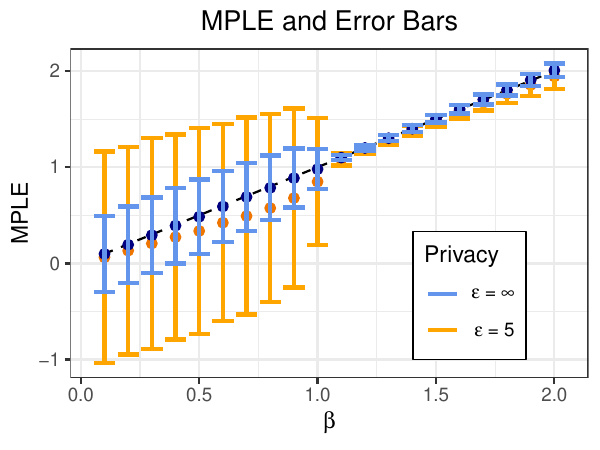}
    \caption{Private and non-private MPLE in an Ising model on and Erd\H{o}s-R\'enyi random graph.}
    \label{beta-erg}
\end{figure}
Figure \ref{beta-erg} shows the performance of the estimator over $\beta\in [0,2]$ alongwith 1 standard deviation errorbars. Notice the phase transition at 1 in the error-bars produced in both the non-private and private estimators. This is in line with what we expect in theory, as $\hat\beta_n$ is $\frac{1}{\sqrt{p_{n}}} = n^{\frac{1}{6}}$ consistent for $\beta<1$ and $\sqrt n$ consistent for $\beta>1$ (see Corollary \ref{cor:erdos-rates}), and the private estimator follows the same trend.

\paragraph{2. Effect of the Number of Observations $n$ on Mean Squared Error (MSE)}
Next we focus on how the MSE of the estimators vary with the number of nodes $n$ in $G_n$. $p_n$ $\delta$ are still taken to be as before, and we use a range of $\vep$ for comparison. Since the rate of consistency is different in the two regimes of $\beta$, we plot the MSE vs $n$ at $\beta =0.5$ (Figure \ref{mse0.5-erg}) and at $\beta =1.5$ (Figure \ref{mse1.5-erg}). 
\begin{figure}[!ht]
    \centering
    \includegraphics[scale = 0.7]{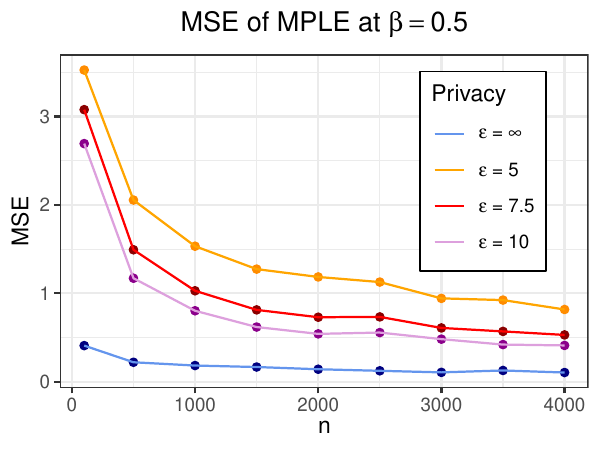}
    \caption{Effect of $n$ on MSE of MPLE in an Ising model on and Erd\H{o}s-R\'enyi random graph with $\beta=0.5$}
    \label{mse0.5-erg}
\end{figure}

\begin{figure}[!ht]
    \centering
    \includegraphics[scale = 0.7]{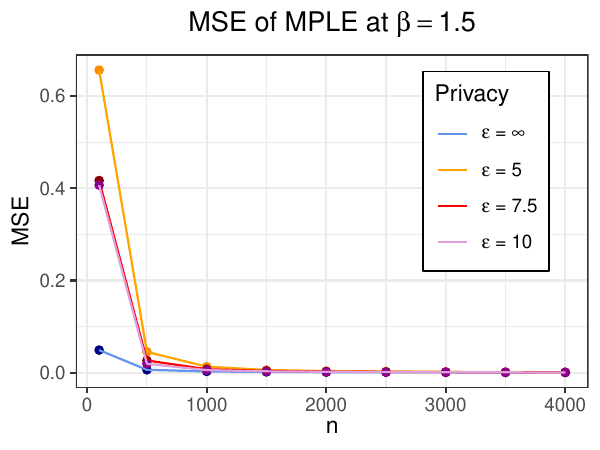}
    \caption{Effect of $n$ on MSE of MPLE in an Ising model on and Erd\H{o}s-R\'enyi random graph with $\beta=1.5$}
    \label{mse1.5-erg}
\end{figure}

Note that both Figure \ref{mse0.5-erg} and \ref{mse1.5-erg} show a downward trend in MSE as expected, but the speed at which the trend dips down varies over $\vep$. This effect can be attributed to the cost of privacy in Corollary \ref{cor:erdos-rates}, in particular for $\beta<1$ the cost is $\frac{p_{n}}{\vep} = \frac{1}{n^{1/3}\vep}$ while for $\beta>1$ the cost is $\frac{1}{n\vep}$, which complements the observation that for $\beta>1$ the reduction in MSE is much faster than $\beta<1$.
% This is because the cost of privacy of Algorithm \ref{PrIsing}, characterized in Corollary \ref{cor:erdos-rates}, is of the order of $\max_i{D_{i}(G_n)}/ a_n\vep$ where $D_{i}(G_n)$ stands for the degree of the $i$th-node in $G_n$, $1\le i\le n$, and $a_n$ varies by the regime of $\beta$ as discussed above.

\paragraph{3. Effect of Edge Density of $G_n$ on MSE}
Here, we investigate the relationship between the edge density of the underlying graph $G_n$ and the resulting Mean Squared Error (MSE) of our estimator. Since the number of edges is expected to be around $n^2p_n$, we take $p_n=n^{-\alpha}$, and vary $\alpha$ to contrast the edge densities.

\begin{figure}[!ht]
    \centering
    \includegraphics[scale = 0.7]{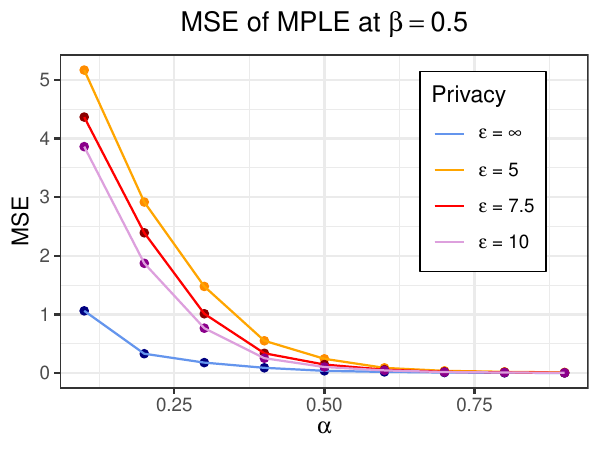}
    \caption{Effect of $p_n$ on MSE of MPLE in an Ising model on and Erd\H{o}s-R\'enyi random graph with $\beta=0.5$}
    \label{mse0.5dens-erg}
\end{figure}

\begin{figure}[!ht]
    \centering
    \includegraphics[scale = 0.7]{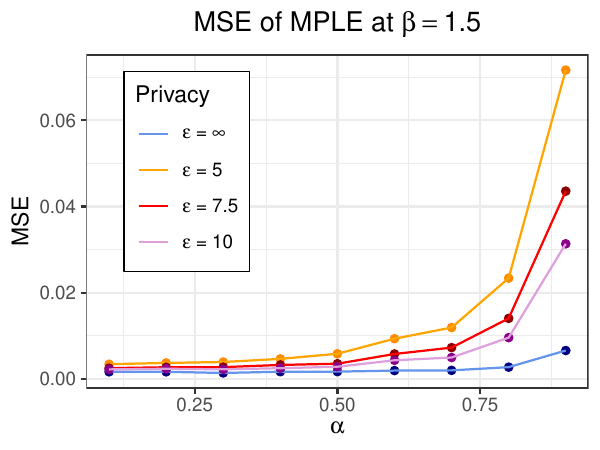}
    \caption{Effect of $p_n$ on MSE of MPLE in an Ising model on and Erd\H{o}s-R\'enyi random graph with $\beta=1.5$}
    \label{mse1.5dens-erg}
\end{figure}

Figure \ref{mse0.5dens-erg} and \ref{mse1.5dens-erg} show the MSE of MPLE for a range of $\alpha$. Recall from Corollary \ref{cor:erdos-rates} that rate of convergence in the high temperature regime $(\beta<1)$ is inversely proportional to $\alpha$, which explains the relation betweeen MSE and $\alpha$ in Figure \ref{mse0.5dens-erg}. On the other hand in the low temperature regime $(\beta>1)$ the rate of convergence is $\sqrt{n}$, independent of the choice of $\alpha$, reflected in non-private curve in Figure \ref{mse1.5dens-erg}. However, in the private case, following Theorem \ref{upperbound},  the increment in error with an increasing $\alpha$ can be attributed to the parameter $\lambda_{n}$, which is approximately $1$ with additional error proportional to $\alpha$ with high probability.
\subsection{Real world networks: Experiments \& Real data}

In the second set of simulations, we adopt real networks from two datasets, HIV transmission in social networks, and political affiliations of online blogs. We conduct synthetic experiments adopting the corresponding networks as fixed, and perform simulations of Ising model realizations on these networks with $\bm J_n = \bm D(G_n)^{-1/2} \bm A{(G_n)}\bm D(G_n)^{-1/2}$, where $\bm D(G) = \text{diag}(D_{1}(G),\cdots, D_{n}(G))$ is a diagonal matrix of the degrees of the nodes in a graph $G$. This leads to 
\begin{align} H(\bm\tau) = \bm\tau^T \bm D(G_n)^{-\frac 12}\bm A(G_n)\bm D(G_n)^{-\frac 12}\bm\tau \label{lapla-scaling} \end{align} 
which can thus handle moderate degree heterogeneity in the network $G_n$, and is a generalization of the scaling used for regular or Erd\H{o}s-R\'enyi graphs. In fact this choice of $\bm J_n$ can be linked to the normalized graph Laplacian $\mathcal{\bm L}_n$ as $\bm J_n = \bm I_n - \mathcal{\bm L}_n$ (\cite{chung1997spectral}), and have been extensively used in node label classification problems (\cite{li2018deeper}, \cite{zhou2020graph, wu2020comprehensive}).
%\subsection{Real data application} \label{realdata}

\subsubsection{HIV status of individuals in a social network} \label{HIVsubsect}

We consider the network of HIV status of individuals in Colorado Springs with $403$ individuals in the years 1988-1993 pooled together (\cite{morris2011hiv}), of which $23$ have HIV status positive. Clearly this is a network where the privacy of the node outcomes is of great importance, and the outcomes are heavily imbalanced in the network, as given by the numbers as well as the network plot in Figure \ref{HIVnetwork}. 

\begin{figure}[!ht]
    \centering
    \includegraphics[scale = 0.5]{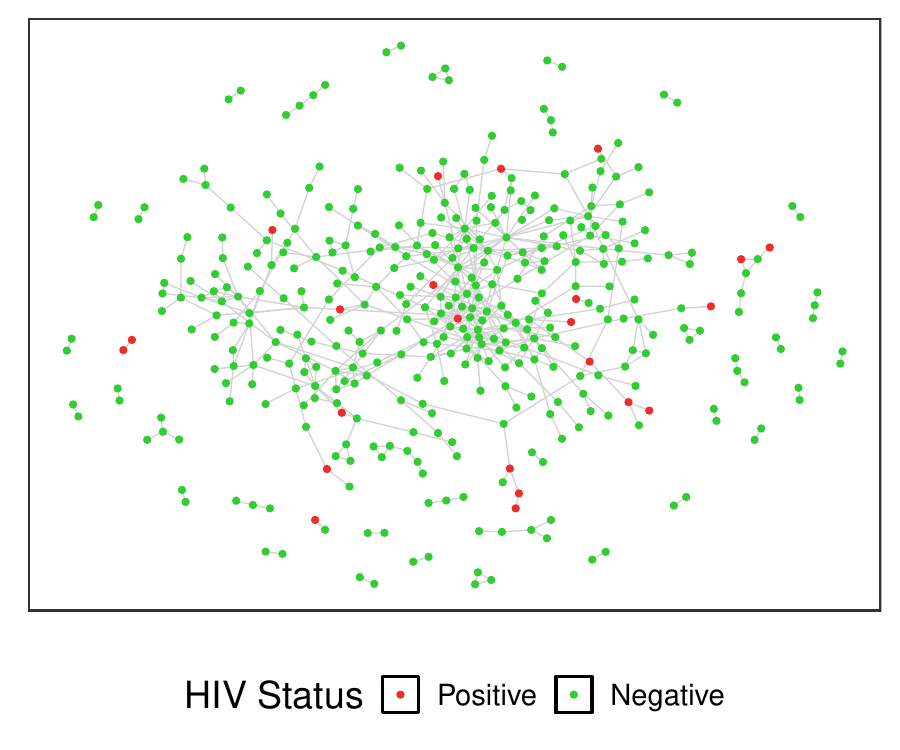}
    \caption{Social Network with HIV Status}
    \label{HIVnetwork}
\end{figure}

We conduct synthetic experiments simulating Ising model realizations from this network under the Laplacian scaling as in Equation \ref{lapla-scaling}, and plot the results in Figure \ref{HIV:data-sims}. $\vep = 5$ and $\delta = 1/n$ are chosen for the plot. Both the private as well as non-private estimate appears to be consistent around the true $\beta$.

\begin{figure}[!ht]
    \centering
    \includegraphics[scale = 0.7]{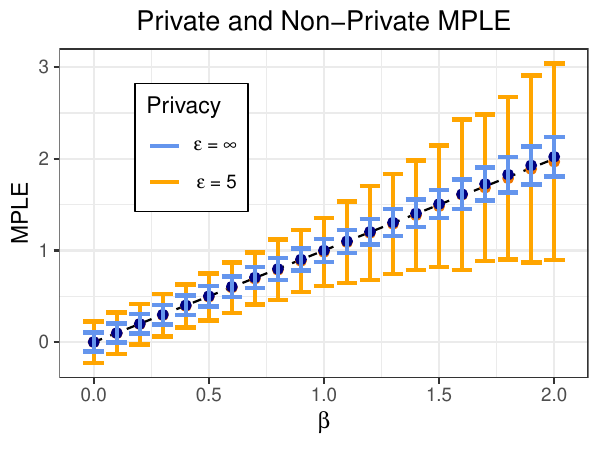}
    \caption{Performance of private and non-private estimators based on Ising model synthetic data on real HIV status network.}
    \label{HIV:data-sims}
\end{figure}

% \textcolor{red}{Write real data stuff here.
% Next, we perform the analysis based on the real data. The non-private \textcolor{blue}{${\hat\beta_n = 1.8}$}, so we plot \textbf{write more stuff}.} 
Next we perform the analysis of the real data. The non-private ${\hat\beta_n = 1.8}$, and we produce $\hat\beta^\p$ and take the Monte-Carlo conditional expectation of $\E[(\hat\beta^\p-\hat\beta_n)^2|\bm\sigma,\bm J_n]$ to quantify the cost of privacy. 
It can be seen from Figure \ref{HIV:MSEeps-real} that the cost of privacy has a decreasing trend over $\vep$, which is as expected. 

%\begin{figure}[!ht]
%    \centering
%    \includegraphics[scale = 0.7]{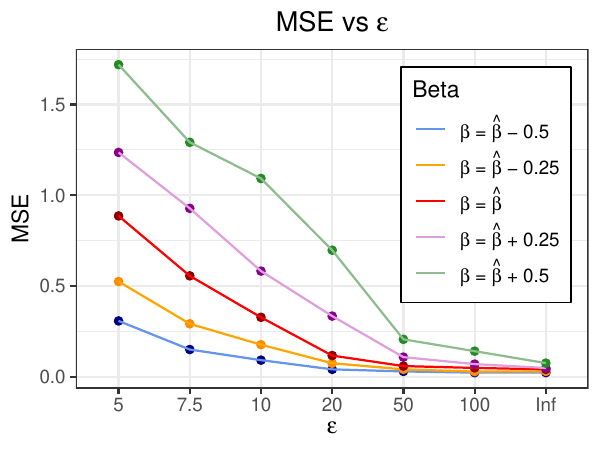}
%    \textcolor{blue}{\caption{Social_Network with HIV Status}}
%    \label{HIV:MSEeps-real}
%\end{figure}

\begin{figure}[!ht]
    \centering
    \includegraphics[scale = 0.7]{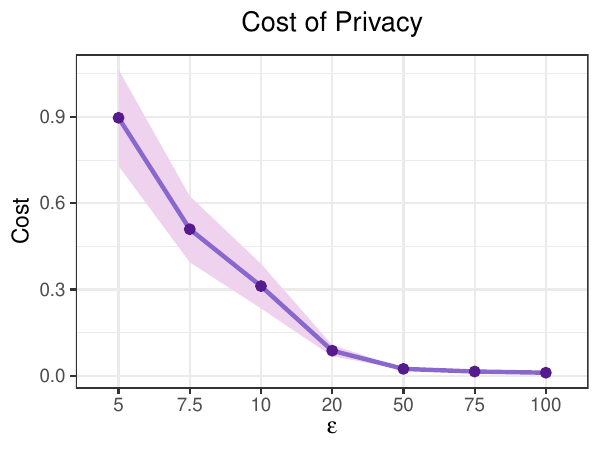}
    %\textcolor{blue}{\caption{Social_Network with HIV Status}}
    \caption{Cost of privacy across $\vep$ on the HIV status network and real data}
    \label{HIV:MSEeps-real}
\end{figure}

\subsubsection{Political leaning of online blogs}\label{sec:Polblogs}

Next we consider the network of popular political blogs over the period of two months preceding the U.S. Presidential Election of 2004 (\cite{adamic2005political}) and their political leaning. The graph, plotted in \ref{polblogsnetwork}, have nodes representing the blogs, color coded by their political leaning, and edges between two nodes if and only if at least one of them link to the other.  We have removed nodes with very high degrees $(\ge 50)$ as they are very popular blogs anyway (like \texttt{blogforamerica.com, churchofcriticalthinking.com, brilliantatbreakfast.blogspot.com, busybusybusy.com,}etc.) and are outliers in measuring the influence of the linking network on the political leaning. Any isolated node have also been removed to create a connected graph. The removed nodes are relatively balanced on both sides of the political spectrum. The resulting network,  of $n= 815$ nodes is relatively balanced in the two outcomes, liberal (382) and conservative (433), and we want to maintain the privacy of the political leanings of the blogs while measuring the influence of the links on a blog being red or blue. 

% \textcolor{red}{Put vertex colored plot of network}

\begin{figure}[!ht]
    \centering
    \includegraphics[scale = 0.5]{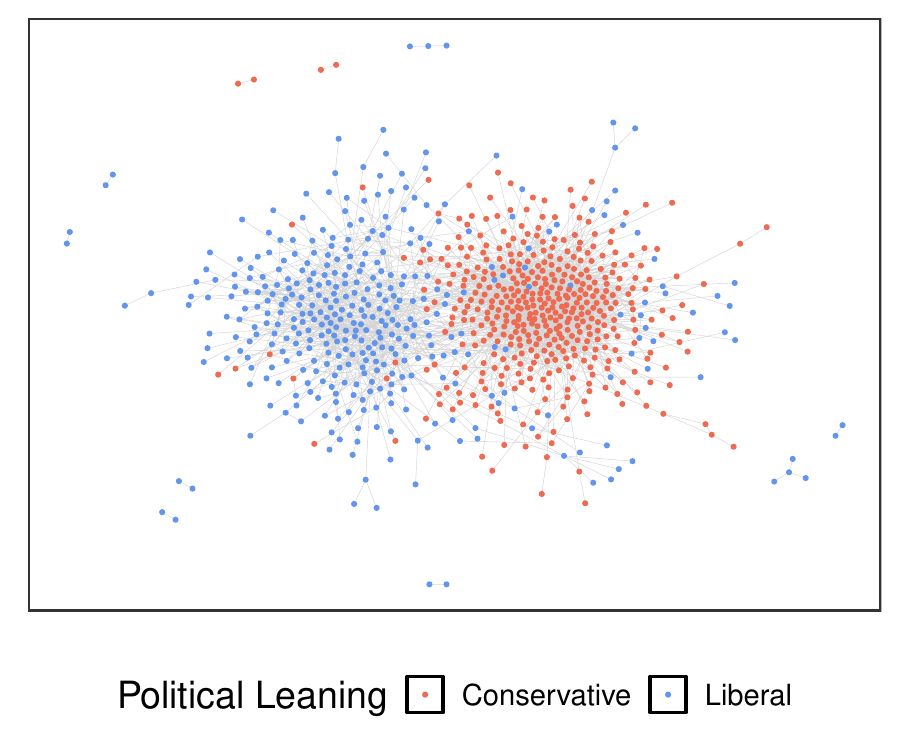}
    \caption{Link Network between Political Blogs}
    \label{polblogsnetwork}
\end{figure}

As before we conduct synthetic experiments simulating Ising model on this network, and the results in Figure \ref{Polblogs:data-sims} show how the estimates, both private and non-private concentrate around the true $\beta$. 

\begin{figure}[!ht]
    \centering
    \includegraphics[scale = 0.7]{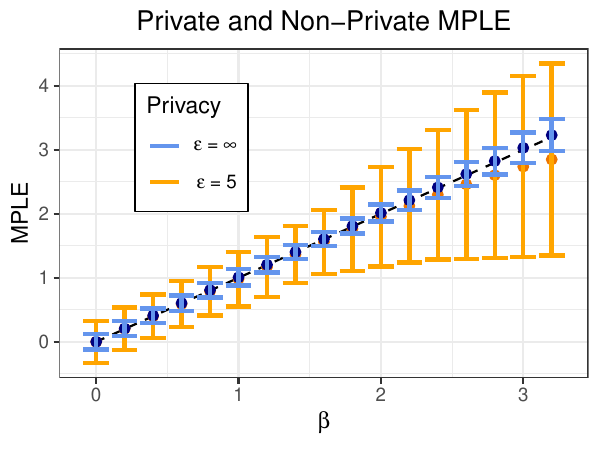}
    %\textcolor{blue}{\caption{Social Network with HIV Status}}
    \caption{Performance of private and non-private estimators based on Ising model synthetic data on real political blogs network.}
    \label{Polblogs:data-sims}
\end{figure}

% \textcolor{red}{Write real data stuff here.} 
In the real data $\hat\beta_n = 2.85$ here, and as before we conduct the cost of privacy analysis as in Section \ref{HIVsubsect}. The results plotted in Figure \ref{Polblogs:MSEeps-data} show the expected downward trend of MSE for rising $\vep$.

\begin{figure}[!ht]
    \centering
    \includegraphics[scale = 0.7]{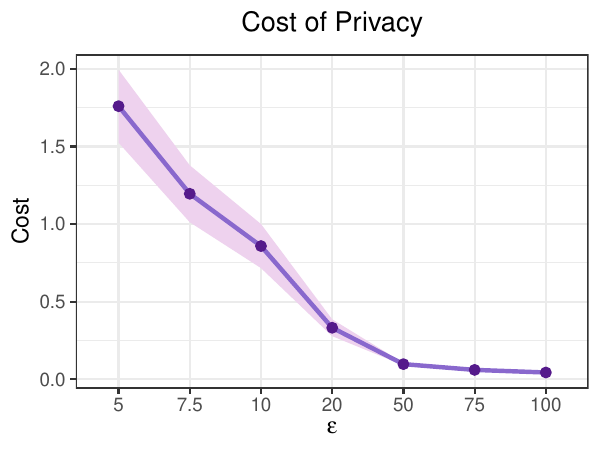}
    %\textcolor{blue}{\caption{PolBlogs}}
    \caption{Cost of privacy across $\vep$ on the political blogs network and real data.}
    \label{Polblogs:MSEeps-data}
\end{figure}

\section{Discussion} \label{discussion}

The Ising model, initially developed for ferromagnetism, has wide applications in various fields, including social sciences and healthcare, for modeling outcome dependencies in networked systems. However, its use in contexts like disease transmission, tax evasion, and social behavior raises privacy concerns.

Current privacy research primarily focuses on independent models, whereas network analysis mainly employs edge and node differential privacy. However, the Ising model presents a unique challenge of protecting interdependent node outcomes alongside network structure, which current literature does not adequately address.

To address this gap, we introduced an $(\varepsilon,\delta)$-differentially private algorithm for Ising models, ensuring node outcome privacy with a single network realization. Our work contributes theoretical insights, including the consistency of the maximum pseudo-likelihood estimator and quantifying privacy cost as $O\left(\frac{\lambda_n}{a_n \varepsilon}\right)$.

Our experiments demonstrate the algorithm's practicality, preserving privacy and utility in synthetic and real-world networks, like HIV status in a social network and political leaning of online blogs.

Despite our contributions, limitations remain, particularly assumptions related to log-partition functions (see \cite{dagan2021learning}). Possible avenues for exploration may involve integrating network privacy and node outcome protection while evaluating privacy guarantees for different privacy notions, such as Renyi Differential Privacy or $f$-DP (see \cite{dong2022gaussian}).

In summary, our research emphasizes the importance of privacy in Ising models. Our $(\varepsilon,\delta)$-differentially private algorithm addresses this concern effectively, offering valuable contributions to this critical field. We hope this work encourages further exploration of privacy preservation techniques in Ising models, promoting a more secure approach to network analysis.

\bibliography{ref.bib}

\onecolumn

\aistatstitle{PrIsing: Supplementary Materials}

\appendix
\section{Proof of Theorem \ref{PrIsingthm}}\label{sec:proofofPrIsing}
In this section we prove that Algorithm \ref{PrIsing} satisfies $(\vep,\delta)$ privacy for any $\vep>0$ and $\delta\geq 0$. The proof is organised as follows. First, in the following lemma, we bound the amount of change in $L_{\bsigma}(\beta)$ induced by flipping a coordinate in $\bsigma$. 
\begin{lemma}\label{mdiff-bound}
    Fix $1\leq j\leq n$. If $\bsigma$ and $\bsigma'$ differ in only the $j$-th entry, then for any $\beta>0$, 
    \begin{align*}
        |L_{\bsigma}(\beta) - L_{\bsigma'}(\beta)|\le 8\dfrac{d_j}{n^2}.
    \end{align*}
\end{lemma}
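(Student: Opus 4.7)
The plan is to decompose the difference $L_{\bsigma}(\beta)-L_{\bsigma'}(\beta)$ into two pieces and bound each separately. First I would rewrite the pseudo-likelihood gradient as
\begin{align*}
L_{\bm\tau}(\beta) \;=\; -\frac{1}{n}H_n(\bm\tau) \;+\; \frac{1}{n}\sum_{i=1}^n m_i(\bm\tau)\tanh(\beta m_i(\bm\tau)),
\end{align*}
using the identity $\sum_i \tau_i m_i(\bm\tau) = \bm\tau^T \bm J_n \bm\tau = H_n(\bm\tau)$. This splits the change under a flip of coordinate $j$ into a contribution from the quadratic form and a contribution from the $\tanh$ term.

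Next, assuming $\sigma'_j = -\sigma_j$ with all other coordinates matched, a direct expansion using symmetry of $\bm J_n$ and $\bm J_n(j,j)=0$ gives $H_n(\bsigma')-H_n(\bsigma) = -4\sigma_j m_j(\bsigma)$, and therefore $|H_n(\bsigma')-H_n(\bsigma)| \leq 4|m_j(\bsigma)| \leq 4d_j/n$, since $|m_j(\bsigma)| \leq \sum_k \bm J_n(j,k) = d_j/n$.

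The nontrivial step is the $\tanh$ piece. Because $\bm J_n(j,j)=0$, we have $m_j(\bsigma)=m_j(\bsigma')$, so the $i=j$ summand vanishes; for $i\neq j$, $|m_i(\bsigma')-m_i(\bsigma)| = 2\bm J_n(i,j)$. What I need is a $\beta$-independent Lipschitz bound on $\phi_\beta(m):=m\tanh(\beta m)$ as a function of $m$. Writing $u=\beta m$, I have
\begin{align*}
\phi_\beta'(m) \;=\; \tanh u + u\,\sech^2 u \;=:\; q(u),
\end{align*}
and $q'(u)=2\sech^2 u\,(1-u\tanh u)$ vanishes at the point $u^*$ where $u^*\tanh u^*=1$. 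At this critical point $\sech^2 u^* = 1-\tanh^2 u^* = 1-1/u^{*2}$, so $q(u^*)=1/u^*+u^*(1-1/u^{*2})=u^*\approx 1.2$. Hence $|\phi_\beta'(m)|\le 2$ uniformly in $\beta$ and $m$.

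With this, $\sum_{i\neq j}|\phi_\beta(m_i(\bsigma'))-\phi_\beta(m_i(\bsigma))| \le 2\sum_{i\neq j}2\bm J_n(i,j) = 4d_j/n$ by symmetry of $\bm J_n$. Combining the two pieces yields
\begin{align*}
|L_{\bsigma}(\beta) - L_{\bsigma'}(\beta)| \;\le\; \frac{1}{n}\Bigl(\frac{4d_j}{n} + \frac{4d_j}{n}\Bigr) \;=\; \frac{8d_j}{n^2}.
\end{align*}
The main obstacle is obtaining the $\beta$-free Lipschitz constant for $m\tanh(\beta m)$: the naive estimate $|\phi_\beta'(m)|\le 1+\beta|m|$ would contaminate the bound with both $\beta$ and $|m_i|$, so the one-variable substitution $u=\beta m$ and explicit maximization of $q(u)$ is essential to keep the sensitivity truly uniform.
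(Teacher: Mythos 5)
Your proof is correct and follows essentially the same route as the paper's: the same split of $L_{\bm\tau}(\beta)$ into the linear term and the $\tanh$ term, the same per-flip sensitivity bound $|m_i(\bsigma)-m_i(\bsigma')|\le 2\bm J_n(i,j)$, and the same uniform Lipschitz bound on $x\mapsto x\tanh(\beta x)$, combined to give $\frac{1}{n}\left(\frac{4d_j}{n}+\frac{4d_j}{n}\right)=\frac{8d_j}{n^2}$. Your two variations --- computing the change in the quadratic term exactly as $H_n(\bsigma')-H_n(\bsigma)=-4\sigma_j m_j(\bsigma)$ rather than bounding the $i\neq j$ and $i=j$ linear contributions separately, and explicitly maximizing $q(u)=\tanh u+u\,\sech^2 u$ to get the constant $u^*\approx 1.2$ instead of the paper's asserted bound $|\kappa'|\le 2$ --- are valid refinements of the same argument, not a different approach.
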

The proof of Lemma \ref{mdiff-bound} is provided in Section \ref{subsec:prooflemmadiffbdd}. Now consider $\bsigma,{\bsigma}'\in \{-1,1\}^{n}$ such that,
\begin{align}\label{eq:defsigmaprime}
    \sum_{i=1}^{n}\one\left\{{\sigma}_{i}\neq {\sigma}_{i}'\right\}=1.
\end{align}
Fix $\alpha>0$. If $\hat\beta^{\p} = \alpha$, then $\alpha$ must satisfy,
\begin{align*}
    L_{\bsigma}(\alpha) + \Delta\alpha/n + b/n = 0
\end{align*}
or equivalently, define
\begin{align}\label{eq:defbalpha}
    b(\alpha;{\bsigma}) := -(nL_{\bsigma}(\alpha) +\Delta\alpha).
\end{align}
Using a change of variable approach the ratio of densities can be written as,
\begin{align}\label{eq:bdddenratio}
    \dfrac{f_{\widehat\beta^\p,{\bsigma}}(\alpha)}{f_{\widehat\beta^\p,{\bsigma}'}(\alpha)} = \dfrac{\nu(b(\alpha,{\bsigma});\varepsilon,\delta)}{\nu(b(\alpha,{\bsigma}');\varepsilon,\delta)}\cdot\dfrac{|\nabla b(\alpha;{\bsigma}')|}{|\nabla b(\alpha;{\bsigma})|},
\end{align} 
where $\nabla$ denotes the partial derivative with respect to $\alpha$ and $f_{\widehat{\beta}^{\p},\bm\tau}$ denotes the density of $\widehat{\beta}^{\p}$ given the data $\bm\tau = {\bsigma},{\bsigma}'$. Now in the subsequent lemmas we bound the two ratios appearing in R.H.S of \eqref{eq:bdddenratio} separately. First, in the following result, we bound the second term.
\begin{lemma}\label{lemma:bdjacobian}
For any ${\bsigma},{\bsigma}'\in \{-1,1\}^{n}$ satisfying \eqref{eq:defsigmaprime},
\begin{align*}
    \left|\dfrac{\nabla b(\alpha;{\bsigma})}{\nabla b(\alpha;{\bsigma}')}\right|\leq e^{\frac{\vep}{2}}
\end{align*}
where $b(\alpha,\cdot)$ is defined in \eqref{eq:defbalpha}.
\end{lemma}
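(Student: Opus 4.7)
My plan is to differentiate $b(\alpha;\bsigma)$ in $\alpha$, observe that the denominator $|\nabla b(\alpha;\bsigma')|$ is at least $\Delta$ so the ratio is well-controlled whenever the numerator differs from the denominator by no more than $O(\vep)\cdot \Delta$, and then quantify this difference by a mean value theorem argument applied coordinate by coordinate. Specifically, differentiating the pseudo-loglikelihood in \eqref{def-lsigma} gives
\begin{align*}
    n\nabla L_{\bsigma}(\alpha) = \sum_{i=1}^{n} m_{i}(\bsigma)^{2}\sech^{2}(\alpha m_{i}(\bsigma))\ge 0,
\end{align*}
so $\nabla b(\alpha;\bsigma) = -(n\nabla L_{\bsigma}(\alpha)+\Delta)$ and $|\nabla b(\alpha;\bsigma)| = n\nabla L_{\bsigma}(\alpha)+\Delta\ge \Delta$ for both $\bsigma$ and $\bsigma'$. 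This reduces the lemma to bounding $|n\nabla L_{\bsigma}(\alpha)-n\nabla L_{\bsigma'}(\alpha)|$.

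To do this, I set $f(m):=m^{2}\sech^{2}(\alpha m)$, so the task is to bound $\sum_{i}|f(m_{i}(\bsigma))-f(m_{i}(\bsigma'))|$. Since $\bsigma$ and $\bsigma'$ differ only in coordinate $j$, we have $|m_{i}(\bsigma)-m_{i}(\bsigma')|=2\bm J_{n}(i,j)$, and both values—together with the entire segment between them—lie in $[-d_{i}/n,d_{i}/n]$ since $|m_{i}(\bm\tau)|\le \sum_{k}\bm J_{n}(i,k)=d_{i}/n$ for any $\bm\tau\in\{\pm 1\}^{n}$. A direct computation yields
\begin{align*}
    f'(m) = 2m\sech^{2}(\alpha m)\bigl[1-\alpha m\tanh(\alpha m)\bigr],
\end{align*}
and I will use the key analytic inequality $|u\tanh(u)\sech^{2}(u)|\le 1$ for all $u\in\mathbb{R}$, which follows from an elementary one-variable maximization, applied with $u=\alpha m$. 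Together with $\sech^{2}\le 1$, this gives $|f'(m)|\le 4|m|$ uniformly in $\alpha$. By the mean value theorem,
\begin{align*}
    |f(m_{i}(\bsigma))-f(m_{i}(\bsigma'))|\le 4\cdot (d_{i}/n)\cdot 2\bm J_{n}(i,j) = \frac{8 d_{i}\bm J_{n}(i,j)}{n},
\end{align*}
and summing over $i$ yields $|n\nabla L_{\bsigma}(\alpha)-n\nabla L_{\bsigma'}(\alpha)|\le \frac{8}{n}\sum_{i=1}^{n} d_{i}\bm J_{n}(i,j)$.

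To conclude, assume without loss of generality that $|\nabla b(\alpha;\bsigma)|\ge |\nabla b(\alpha;\bsigma')|$. Then
\begin{align*}
    \frac{|\nabla b(\alpha;\bsigma)|}{|\nabla b(\alpha;\bsigma')|} \le 1 + \frac{8 n^{-1}\sum_{i}d_{i}\bm J_{n}(i,j)}{\Delta}\le 1+\frac{\vep}{3},
\end{align*}
using the algorithm's lower bound $\Delta\ge \frac{24}{\vep n}\sum_{i}d_{i}\bm J_{n}(i,j)$. Since $1+\vep/3\le e^{\vep/3}\le e^{\vep/2}$, the claim follows.

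The main obstacle is obtaining an $\alpha$-independent bound on $|f'(m)|$: a naive expansion picks up an $\alpha$ factor from differentiating the $\sech^{2}(\alpha m)$ term, which would force $\Delta$ to scale with $\alpha$ and break the sensitivity calibration. Overcoming this rests on the sharp cancellation in $\alpha m\tanh(\alpha m)\sech^{2}(\alpha m)$, whose exponential decay in $|\alpha m|$ exactly compensates the linear growth in $\alpha m$, yielding a global constant bound.
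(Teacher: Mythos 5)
Your proof is correct and follows essentially the same route as the paper: bound the ratio by $1 + |\nabla b(\alpha;\bsigma)-\nabla b(\alpha;\bsigma')|/\Delta$, control the coordinatewise differences via the mean value theorem to get a bound of order $\frac{1}{n}\sum_i d_i \bm J_n(i,j)$, and finish with the algorithm's choice $\Delta \geq \frac{24}{\varepsilon n}\sum_i d_i\bm J_n(i,j)$ together with $1+x\le e^x$. The only (harmless) difference is cosmetic: you apply the mean value theorem directly to $m\mapsto m^2\sech^2(\alpha m)$ using $|u\tanh(u)\sech^2(u)|\le 1$, whereas the paper factors the difference of squares and applies it to $m\mapsto m\sech(\alpha m)$, yielding constants $8$ versus $12$, both well within the $\varepsilon/2$ budget.
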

Next, we provide a bound on the ratio of densities in the following lemma.
\begin{lemma}\label{lemma:bddensityratio}
Consider any ${\bsigma},{\bsigma}'\in \{-1,1\}^{n}$ satisfying \eqref{eq:defsigmaprime}. Then using Algorithm \ref{PrIsing} for $(\vep,\delta)$ privacy with $0<\delta<\frac{2}{\sqrt{e}}$, we get, 
\begin{align*}
    \dfrac{\nu(b(\alpha,{\bsigma});\varepsilon,\delta)}{\nu(b(\alpha,{\bsigma}');\varepsilon,\delta)}\leq e^{\vep/2}
\end{align*}
on a set $S\subseteq\R$ such that $\P\left(b(\alpha,{\bsigma})\in S\right)\geq 1-\delta$, and for $(\vep,0)$ privacy we get,
\begin{align*}
    \dfrac{\nu(b(\alpha,{\bsigma});\varepsilon,0)}{\nu(b(\alpha,{\bsigma}');\varepsilon,0)}\leq e^{\vep/2}.
\end{align*}
\end{lemma}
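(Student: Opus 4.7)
The plan is to first translate the single-coordinate change $\bm\sigma \to \bm\sigma'$ into a bound on the sensitivity of $b(\alpha, \cdot)$, then handle the Laplace and Gaussian cases separately using that bound. Since $b(\alpha, \bm\tau) = -(n L_{\bm\tau}(\alpha) + \Delta\alpha)$ and the $\Delta\alpha$ term is data-independent, we have $b(\alpha, \bm\sigma) - b(\alpha, \bm\sigma') = -n(L_{\bm\sigma}(\alpha) - L_{\bm\sigma'}(\alpha))$. Invoking Lemma \ref{mdiff-bound} with the index $j$ at which $\bm\sigma, \bm\sigma'$ differ gives $|b(\alpha, \bm\sigma) - b(\alpha, \bm\sigma')| \leq 8 d_j/n \leq \zeta$; this is the sensitivity bound driving both cases.

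For the $\delta = 0$ case, $\nu(\,\cdot\,; \vep, 0)$ is the density of $\mathrm{Lap}(0, 2\zeta/\vep)$, and the reverse triangle inequality gives
\begin{align*}
\frac{\nu(u; \vep, 0)}{\nu(u'; \vep, 0)} = \exp\!\left(\frac{\vep}{2\zeta}(|u'| - |u|)\right) \leq \exp\!\left(\frac{\vep}{2\zeta}|u - u'|\right).
\end{align*}
Substituting $u = b(\alpha, \bm\sigma)$, $u' = b(\alpha, \bm\sigma')$ and applying the sensitivity bound $|u - u'| \leq \zeta$ immediately yields the deterministic bound $e^{\vep/2}$ on all of $\R$, with no exceptional set needed.

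The Gaussian case is the main obstacle, since the ratio of Gaussian densities is not uniformly bounded but depends on the point of evaluation. Writing $x = b(\alpha, \bm\sigma)$ and $v = x - b(\alpha, \bm\sigma')$ (so $|v| \leq \zeta$), the density ratio simplifies to $\exp\!\left((v^2 - 2xv)/(2\gamma^2)\right)$, which I would upper bound by $\exp\!\left((2|x|\zeta + \zeta^2)/(2\gamma^2)\right)$ using $|v| \leq \zeta$. Requiring this to be at most $e^{\vep/2}$ becomes the pointwise condition $|x| \leq \vep\gamma^2/(2\zeta) - \zeta/2$; substituting the Algorithm's choice $\gamma = \zeta\sqrt{8\log(2/\delta) + 4\vep}/\vep$ reduces this to $|x| \leq 4\zeta\log(2/\delta)/\vep + 3\zeta/2$. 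I would then take $S = \{u \in \R : |u| \leq \gamma\sqrt{2\log(2/\delta)}\}$. The key observation is that when $\alpha = \hat\beta^\p$, the defining equation of $\hat\beta^\p$ forces $b(\hat\beta^\p, \bm\sigma)$ to coincide with the sampled Gaussian noise, so the standard sub-Gaussian tail bound $\P(|\mathcal{N}(0,\gamma^2)| > \gamma\sqrt{2\log(2/\delta)}) \leq \delta$ gives $\P(b(\hat\beta^\p, \bm\sigma) \in S) \geq 1 - \delta$. It then remains only to verify algebraically that $\gamma\sqrt{2\log(2/\delta)} \leq 4\zeta\log(2/\delta)/\vep + 3\zeta/2$; squaring and substituting $\gamma^2$ collapses this to $4\vep\log(2/\delta) + 9\vep^2/4 \geq 0$, which is trivially true. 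This bookkeeping to produce a high-probability set $S$ on which the pointwise density bound holds is the only delicate part — the rest follows the textbook Gaussian mechanism argument, adapted so that the $e^{\vep/2}$ threshold dovetails with the $e^{\vep/2}$ from the Jacobian bound in Lemma \ref{lemma:bdjacobian}.
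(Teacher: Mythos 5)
Your proposal is correct and follows essentially the same route as the paper's proof: the sensitivity bound $|b(\alpha,\bm\sigma)-b(\alpha,\bm\sigma')|\leq\zeta$ from Lemma \ref{mdiff-bound}, the reverse-triangle-inequality argument for the Laplace case, and for the Gaussian case the bound $\exp\bigl((2|b(\alpha,\bm\sigma)|\zeta+\zeta^2)/(2\gamma^2)\bigr)$ restricted to the tail set $S=\{|u|\leq\gamma\sqrt{2\log(2/\delta)}\}$, using that $b(\hat\beta^{\p},\bm\sigma)$ equals the sampled noise. Your final algebraic verification (solving for the admissible $|x|$ and checking the threshold) is just a rearrangement of the paper's direct substitution, so the two arguments coincide.
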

The proofs of Lemma \ref{lemma:bdjacobian} and Lemma \ref{lemma:bddensityratio} are given in Sections \ref{sec:prfjacobian} and \ref{sec:prfdenratio} respectively. We now proceed to show that Algorithm \ref{PrIsing} preserves the notion of $(\vep,\delta)$ differential privacy as defined in Definition \ref{pridef}. The proof of $(\vep,0)$ differential privacy is now immidiate by combining the bounds from Lemma \ref{lemma:bdjacobian}, Lemma \ref{lemma:bddensityratio} and \eqref{eq:bdddenratio}. For $\delta>0$, recalling $S$ from Lemma \ref{lemma:bddensityratio} observe that,
\begin{align*}
    f_{\widehat{\beta}^{\p},{\bsigma}}(\alpha)
    &\leq e^{\vep/2}f_{\widehat{\beta}^{\p},{\bsigma}'}(\alpha)\one\left\{b(\alpha,{\bsigma})\in S\right\} + f_{\widehat{\beta}^{\p},{\bsigma}}(\alpha)\one\left\{b(\alpha,{\bsigma})\in S^{c}\right\}\\
    &\leq e^{\vep/2}f_{\widehat{\beta}^{\p},{\bsigma}'}(\alpha) + f_{\widehat{\beta}^{\p},{\bsigma}}(\alpha)\one\left\{b(\alpha,{\bsigma})\in S^{c}\right\}
\end{align*}
Then for any borel set $A\subseteq\R$ and using a change of variable we get,
\begin{align*}
    \int_{A}f_{\widehat{\beta}^{\p},{\bsigma}}(\alpha)\mathrm{d}\alpha
    &\leq e^{\vep/2}\int_{A}f_{\widehat{\beta}^{\p},{\bsigma}'}(\alpha)\mathrm{d}\alpha + \int f_{\widehat{\beta}^{\p},{\bsigma}}(\alpha)\one\left\{b(\alpha,{\bsigma})\in S^{c}\right\}\mathrm{d}\alpha\\
    &\leq e^{\vep/2}\int_{A}f_{\widehat{\beta}^{\p},{\bsigma}'}(\alpha)\mathrm{d}\alpha + \int \nu\left(b(\alpha,{\bsigma});\vep,\delta\right)\one\left\{b(\alpha,{\bsigma})\in S^{c}\right\}\mathrm{d}b(\alpha,{\bsigma})\\
    &\leq e^{\vep/2}\int_{A}f_{\widehat{\beta}^{\p},{\bsigma}'}(\alpha)\mathrm{d}\alpha + \delta
\end{align*}
where the last bound follows by definiton of $S$ from Lemma \ref{lemma:bddensityratio}. The proof is now completed by recalling the choice of ${\bsigma}$ and ${\bsigma}'$ from \eqref{eq:defsigmaprime}.
\subsection{Proof of Lemma \ref{mdiff-bound}}\label{subsec:prooflemmadiffbdd}
        Recalling \eqref{eq:defmitau}, note that $m_i({\bsigma})$ does not depend on ${\sigma}_i$ for all $1\leq i\leq n$. Using \eqref{def-lsigma}, we have
            \begin{align*}
                L_{\bm\tau}(\beta) &= -\frac{1}{n}\sum_{i=1}^n m_i(\bm\tau)\tau_i + \frac{1}{n}\sum_{i=1}^n m_i(\bm\tau)\tanh(\beta m_i(\bm\tau))
            \end{align*}
            for $\bm\tau = {\bsigma},{\bsigma}'$.
            Then,
            \begin{align}
                L_{\bsigma}(\beta) - L_{{\bsigma}'}(\beta) &= -\frac{1}{n}\sum_{i: i\neq j} [m_i({\bsigma}) - m_i({\bsigma}')]{\sigma}_i - \frac{1}{n}m_j({\bsigma})({\sigma}_j -{\sigma}_j') \nonumber \\ & \hspace{1cm}- \frac{1}{n}\sum_{i=1}^n [m_i({\bsigma}^{\prime})\tanh(\beta m_i({\bsigma}^{\prime})) - m_i({\bsigma})\tanh(\beta m_i({\bsigma}))] \label{difl-1}
            \end{align}
        Now recalling that all entries of $\bmJ_{n}$ are non-negative, 
        \begin{align}\label{eq:boundonmisigmadiff}
            |m_i({\bsigma}) - m_i({\bsigma}')| = |\sum_{k=1}^n ({\sigma}_k - {\sigma}_k')\bmJ_n(i,k)| = |({\sigma}_j - {\sigma}_j')\bmJ_n(i,j)|\le 2\bmJ_n(i,j).
        \end{align}
        Consider,
        \begin{align*}
            \kappa(x) := x\tanh(\beta x),\ \forall x\in\mathbb{R}.
        \end{align*}
        It is easy to see that,
        \begin{align*}
            \kappa^{'}(x) = \tanh(\beta x) + x\beta\sech^{2}(\beta x),\ \forall x\in\mathbb{R},
        \end{align*}
        and by definition $|\kappa^{'}|\leq 2$. Then using the Mean Value Theorem we conclude,
        \begin{align*}
            |\kappa(x) - \kappa(y)|\leq 2|x-y|\text{ for all }x,y\in\mathbb{R}.
        \end{align*}
        Now recalling the definition of $\kappa$ and \eqref{eq:boundonmisigmadiff} we get,
        \begin{align}\label{eq:boundonmitanhdiff}
            |m_i({\bsigma}')\tanh(\beta m_i({\bsigma}')) 
            & - m_i({\bsigma})\tanh(\beta m_i({\bsigma}))| = |\kappa(m_{i}({\bsigma})) - \kappa(m_{i}({\bsigma}'))|\leq 4\bmJ_{n}(i,j)
        \end{align}
        Thus combining \eqref{difl-1},\eqref{eq:boundonmitanhdiff} and noticing that $|m_j({\bsigma})|\le \sum_{k=1}^n \bmJ_n(j,k) \le d_j/n$, we have,
        \begin{align*}
            |L_{\bsigma}(\beta) - L_{{\bsigma}'}(\beta)|&\le \frac{2}{n}\sum_{i=1}^n \bmJ_n(i,j) + \frac{2}{n}|m_j({\bsigma})| + \frac{4}{n}\sum_{i=1}^n \bmJ_n(i,j)\le 2\dfrac{d_j}{n^2} + 2\dfrac{d_j}{n^2} + 4\dfrac{d_j}{n^2}=8\dfrac{d_j}{n^2},
        \end{align*}
        completing the proof of the lemma.

\subsection{Proof of Lemma \ref{lemma:bdjacobian}}\label{sec:prfjacobian}
Since $\bsigma$ and $\bsigma'$ satisfy \eqref{eq:defsigmaprime}, then there exists $1\leq j\leq n$ such that $\sigma_{j}\neq \sigma_{j}'$. Note that,
    \begin{align}
        \left|\dfrac{\nabla b(\alpha,{\bsigma})}{\nabla b(\alpha,{\bsigma}')}\right|\leq 1+\left|\dfrac{\nabla b(\alpha,{\bsigma}) - \nabla b(\alpha,{\bsigma}')}{\nabla b(\alpha,{\bsigma}')}\right|.
        \label{bd1}
    \end{align}
    Once again by \eqref{eq:defmitau}, note that $m_i({\bsigma})$ does not depend on ${\bsigma}_i$ for all $1\leq i\leq n$. Now recalling \eqref{eq:defbalpha} and taking derivative on both sides of \eqref{difl-1} shows,
    \begin{align}
        |\nabla b(\alpha,{\bsigma}) - \nabla b(\alpha,{\bsigma}')|
        &= \left|\sum_{i:i\neq j} m_i({\bsigma})^2\sech^2\left(\alpha m_i({\bsigma})\right) - m_i({\bsigma}')^2\sech^2\left(\alpha m_i({\bsigma}')\right)\right|\nonumber\\
        & \leq \sum_{i:i\neq j}\left|m_i({\bsigma})^2\sech^2\left(\alpha m_i({\bsigma})\right) - m_i({\bsigma}')^2\sech^2\left(\alpha m_i({\bsigma}')\right)\right|\nonumber\\
        &\leq \frac{2}{n}\sum_{i:i\neq j}d_{i}\left|m_{i}({\bsigma})\sech(\alpha m_{i}({\bsigma})) - m_{i}({\bsigma}')\sech(\alpha m_{i}({\bsigma}'))\right|
        \label{delbdiffbound}
    \end{align}
    where the inequality in \eqref{delbdiffbound} follows from the bounds $|m_{i}({\bsigma})|\leq d_{i}/n$ and $|\sech(\cdot)|\leq 1$. Define,
    \begin{align*}
        \kappa_{0}(x):= x\sech(\alpha x)\ \forall x\in\mathbb{R}.
    \end{align*}
    Observe that,
    \begin{align*}
        \kappa_{0}'(x) = \sech(\alpha x) - x\alpha\sech(\alpha x)\tanh(\alpha x)\ \forall x\in\mathbb{R}.
    \end{align*}
    Now it is easy to infer that $|\kappa_{0}'(\cdot)|\leq 3$. Using Mean value theorem we get,
    \begin{align}\label{eq:MVTkappa0}
        \left|\kappa_{0}(x) - \kappa_{0}(y)\right|\leq 3|x-y|\ \forall x,y\in\mathbb{R}.
    \end{align}
    Finally by the definition of $\kappa_{0}$, \eqref{eq:MVTkappa0}, \eqref{eq:boundonmisigmadiff} and recalling that entries of $\bmJ_{n}$ are non-negative we conclude,
    \begin{align}\label{eq:misechbdd}
        \left|m_{i}({\bsigma})\sech(\alpha m_{i}({\bsigma})) - m_{i}({\bsigma}')\sech(\alpha m_{i}({\bsigma}'))\right|\leq 6\bmJ_{n}(i,j), \ \forall i\neq j.
    \end{align}
    Next, note that 
    \begin{align*}
        |\nabla b(\alpha,{\bsigma}')| = |\Delta + \sum_{i=1}^n m_i({\bsigma}')^2\sech^2(\alpha m_i({\bsigma}'))|\geq \Delta
    \end{align*}
    Thus recalling \eqref{bd1}, \eqref{delbdiffbound} and \eqref{eq:misechbdd} shows,
    \begin{align*}
        \left|\dfrac{\nabla b(\alpha,{\bsigma})}{\nabla b(\alpha,{\bsigma}')}\right|\leq 1 + \dfrac{\frac{12}{n}\sum_{i:i\neq j}d_{i}\bmJ_{n}(i,j)}{\Delta}
        \leq 1+\dfrac{\varepsilon}{2}\leq e^{\varepsilon/2}
    \end{align*}
    completing the proof.

\subsection{Proof of Lemma \ref{lemma:bddensityratio}}\label{sec:prfdenratio}
First suppose that we are using Algorithm \ref{PrIsing} for $(\vep,\delta)$ privacy. Let $\Gamma = b(\alpha,{\bsigma}) - b(\alpha,{\bsigma}')$. By Lemma \ref{mdiff-bound}, $|\Gamma|\leq\zeta =8 \max_{j}\frac{d_{j}}{n}$. Then, 
    \begin{align}
        \dfrac{\nu(b(\alpha,{\bsigma});\varepsilon,\delta)}{\nu(b(\alpha,{\bsigma}');\varepsilon,\delta)} &= \exp\left(\dfrac{1}{2\gamma^2}(b(\alpha,{\bsigma}')^2 - b(\alpha,{\bsigma})^2)\right) = \exp\left(\dfrac{1}{2\gamma^2}((b(\alpha,{\bsigma})-\Gamma)^2 - b(\alpha,{\bsigma})^2)\right)\nonumber \\
        &= \exp\left(\dfrac{1}{2\gamma^2}(-2b(\alpha,{\bsigma})\Gamma) + \Gamma^2)\right)\nonumber \\
        &\le \exp\left(\dfrac{1}{2\gamma^2}|2b(\alpha,{\bsigma})|\zeta + \zeta^2\right) \label{ratiobound1}
    \end{align}
    Note that for any random variable $Z\sim\mathcal{N}(0,1)$
    \begin{align*}
        \P(|Z|>t)\le 2e^{-t^2/2},\text{ for }t>1.
    \end{align*}
    Hence for, $b(\alpha,{\bsigma})\sim\mathcal{N}(0, \gamma^2)$,
    \begin{align*}
        \P(|b(\alpha,{\bsigma})|\geq \gamma t)\le 2e^{-t^2/2},\text{ for any }t>1.
    \end{align*}
    Let $S_{t}:= \{a\in\R: |a|\ge \gamma t\}$. Then it is easy to observe that for $\delta<\frac{2}{\sqrt{e}}$, and choosing $t_{0} = \sqrt{2\log(2/\delta)}$ we get,
    \begin{align*}
        \P\left(b(\alpha,{\bsigma})\in S_{t_{0}}\right)\leq \delta
    \end{align*}
    Thus on the set $S:= S_{t_{0}}^{c}$, using \eqref{ratiobound1} and recalling the definition of $\gamma$ from Algorithm \ref{PrIsing} we find,
    \begin{align*}
        \dfrac{\nu(b(\alpha,{\bsigma});\varepsilon,\delta)}{\nu(b(\alpha,{\bsigma}');\varepsilon,\delta)}\leq \exp\left(\dfrac{1}{2\gamma^2}\left\{\gamma\zeta\sqrt{8\log\dfrac 2\delta} +\zeta^2\right\}\right)\leq e^{\vep/2}
    \end{align*}
    which completes the proof of Lemma \ref{lemma:bddensityratio} for $(\vep,\delta)$ privacy. Now suppose we are using Algorithm \ref{PrIsing} for $(\vep,0)$ privacy. Then by definition,
    \begin{align*}
        \dfrac{\nu(b(\alpha,{\bsigma});\varepsilon,0)}{\nu(b(\alpha,{\bsigma}');\varepsilon,0)}
        = \exp\left(\frac{\vep}{2\zeta}\left(|b(\alpha,{\bsigma}')| - |b(\alpha,{\bsigma})|\right)\right)\leq \exp\left(\frac{\vep}{2\zeta}\left|b(\alpha,{\bsigma}') - b(\alpha,{\bsigma})\right|\right)\leq \exp\left(\frac{\vep}{2}\right)
    \end{align*}
    where the upper bound once again follows from Lemma \ref{mdiff-bound}.

%%%%%%%%%%%%%%%%%%%%%%%%%%%%%%%%%%%%%%%%%%%%%%%%%%%%%%%%%%%%%%%%%%%%%%%%%%%%%%%%%%%%%%%%%%%%%%%%%

\section{Regret Bound of \texttt{PrIsing} Algorithm}\label{sec:proofofupperbdd}

In this section we embark on a careful analysis of the \texttt{PrIsing} Algorithm and provide a detailed regret bound on the performance of the same. The performance of the non-private MPLE was analysed Theorem 2.1 from \cite{bhattacharya2018inference}, where the authors concluded that the estimator is $\sqrt{a_{n}}$ consistent, where, under certain regularity conditions on the log-partition function, $a_{n}$ is the Frobenius norm of the matrix $\bm J_{n}$. In the following result we recall the sufficient conditions for consistency of MPLE from \cite{bhattacharya2018inference}, and analyze the performance of \texttt{PrIsing} under the same.

\begin{theorem} \label{thm:upperbdddetailed}
    Let $\sup_{n\geq 1}\|\bm J_{n}\|<\infty$, and let $\beta_{0}>0$ be fixed. Suppose $\{a_{n}:n\geq 1\}$ is a sequence such that,
    \begin{align*}
        a_{n}\overset{n\rightarrow\infty}{\longrightarrow}\infty
    \end{align*}
    and for some $\vartheta>0$,
    \begin{align*}
        0<\liminf_{n\rightarrow\infty}\frac{1}{a_{n}}F_{n}(\beta_{0} - \vartheta)\leq \limsup_{n\rightarrow\infty}\frac{1}{a_{n}}F_{n}(\beta_{0} + \vartheta)<\infty.
    \end{align*}
    Further assume that,
    \begin{itemize}
        \item[(i)]$u_{n,K}:= \E_{\beta_{0}}\left[\sum_{i=1}^{n}|m_{i}(\bm\sigma)|\one\left\{|m_{i}(\bm\sigma)|>K\right\}\right]$
        is such that $\limsup_{K\rightarrow\infty}\limsup_{n\rightarrow\infty}\frac{1}{a_{n}} u_{n,K} = 0$, and 
        \item[(ii)] $\limsup_{n\rightarrow\infty}\frac{1}{a_{n}}\sum_{i,j=1}^{n}\bm J_{n}(i,j)^2<\infty$
    \end{itemize}
    Then the private MPLE estimator $\hat{\beta}^{\p}$ from Algorithm \ref{PrIsing} satisfies,
    \begin{align*}
        |\widehat{\beta}^\p - \beta_{0}| &= O_{p}\left({\frac{1}{\sqrt{a_{n}}} + \frac{\sqrt{8\zeta^2\rho_{\vep,\delta} + \vep^2\Delta^2\beta_{0}^2}}{a_{n}\varepsilon}}\right),
    \end{align*}
    where $$\rho_{\vep,\delta}=\begin{cases} \log (2/\delta) + \vep/2 &\text{ if } \delta >0 \\ 1 &\text{ if }\delta = 0\end{cases}$$
\end{theorem}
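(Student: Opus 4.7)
The plan is to follow the standard M-estimation template, adapted to the perturbed pseudo-likelihood equation. Set
$$G(\beta) := L_{\bsigma}(\beta) + \Delta\beta/n + b/n,$$
so that $\widehat{\beta}^{\p}$ is a nonnegative root of $G$, and observe $G'(\beta)=L_{\bsigma}'(\beta)+\Delta/n>0$ with $L_{\bsigma}'(\beta)=\frac{1}{n}\sum_{i}m_{i}(\bsigma)^{2}\sech^{2}(\beta m_{i}(\bsigma))\geq 0$, so $G$ is strictly increasing and the root is unique. First I would establish consistency $\widehat{\beta}^{\p}\to\beta_{0}$ in probability by the monotonicity argument in the proof of Theorem 2.1 of \cite{bhattacharya2018inference}: it suffices to show $G(\beta_{0}-\eta)<0<G(\beta_{0}+\eta)$ with probability tending to one for every fixed $\eta>0$. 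The added perturbations $\Delta\beta/n$ and $b/n$ are negligible against the $\Theta(\eta a_{n}/n)$ bias of $L_{\bsigma}$ at $\beta_{0}\pm\eta$, since $\Delta=O(1/\vep)$, $|b|=O_{p}(\gamma)$ in the Gaussian case or $O_{p}(\zeta/\vep)$ in the Laplace case, and $a_{n}\to\infty$.

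With consistency in hand, I would apply the mean value theorem to $G(\widehat{\beta}^{\p})=0$: there exists $\tilde\beta$ between $\beta_{0}$ and $\widehat{\beta}^{\p}$, hence in a fixed neighborhood of $\beta_{0}$ on the consistency event, with
$$\widehat{\beta}^{\p}-\beta_{0}=-\frac{G(\beta_{0})}{G'(\tilde\beta)}=-\frac{L_{\bsigma}(\beta_{0})+\Delta\beta_{0}/n+b/n}{L_{\bsigma}'(\tilde\beta)+\Delta/n}.$$
For the numerator, the identity $\E_{\beta_{0}}[\sigma_{i}\mid\bsigma_{-i}]=\tanh(\beta_{0}m_{i}(\bsigma))$ (valid because $m_{i}$ is a function of $\bsigma_{-i}$) gives $\E_{\beta_{0}}[nL_{\bsigma}(\beta_{0})]=0$, and a direct second-moment computation along the lines of \cite{chatterjee2007estimation, bhattacharya2018inference} under conditions (i)--(ii) yields $\E_{\beta_{0}}[(nL_{\bsigma}(\beta_{0}))^{2}]=O(a_{n})$, so $|L_{\bsigma}(\beta_{0})|=O_{p}(\sqrt{a_{n}}/n)$. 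The remaining pieces are the deterministic $\Delta\beta_{0}/n$ and the noise $|b|/n=O_{p}(\gamma/n)$ in the Gaussian case (equivalently $O_{p}(\zeta/(n\vep))$ in the Laplace case). For the denominator, I would show that uniformly over $\tilde\beta$ in a small neighborhood of $\beta_{0}$,
$$L_{\bsigma}'(\tilde\beta)=\frac{1}{n}\sum_{i=1}^{n}m_{i}(\bsigma)^{2}\sech^{2}(\tilde\beta m_{i}(\bsigma))\gtrsim \frac{a_{n}}{n}$$
with probability tending to one, again borrowing the argument from \cite{bhattacharya2018inference}: the uniform-integrability control (i) forces the mass of $\sum m_{i}^{2}$ to concentrate on indices where $|m_{i}|$ is bounded and $\sech^{2}(\tilde\beta m_{i})$ is thus bounded below, while $\liminf_{n}F_{n}(\beta_{0}\pm\vartheta)/a_{n}>0$ supplies the scale.

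Dividing the two bounds and consolidating the deterministic and noise contributions via $|x|+|y|\asymp\sqrt{x^{2}+y^{2}}$, after substituting $\gamma=\zeta\sqrt{8\log(2/\delta)+4\vep}/\vep$ in the Gaussian case and the analogous Laplace constant, produces exactly the stated rate
$$\frac{1}{\sqrt{a_{n}}}+\frac{\sqrt{8\zeta^{2}\rho_{\vep,\delta}+\vep^{2}\Delta^{2}\beta_{0}^{2}}}{a_{n}\vep}.$$
The main obstacle is the uniform denominator lower bound: translating the log-partition hypothesis $F_{n}(\beta)=\Theta(a_{n})$ into a quantitative $\Omega(a_{n}/n)$ lower bound on the random tanh-weighted quadratic form, uniformly in $\tilde\beta$ near $\beta_{0}$, is delicate because $\sech^{2}$ vanishes as $|m_{i}|\to\infty$ and without the uniform-integrability input (i) the natural scale would collapse. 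This step is essentially inherited from the non-private analysis; the genuinely new content lies in carefully tracking the additional perturbations $\Delta\beta_{0}/n$ and $b/n$ in the numerator and in verifying that the privacy-calibrated noise variance $\gamma^{2}$ indeed contributes only at the higher-order $a_{n}^{-1}$ scale rather than spoiling the leading $a_{n}^{-1/2}$ term.
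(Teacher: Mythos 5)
Your route---prove consistency first via a sign-change argument for the strictly increasing perturbed score $G$, then linearize by the mean value theorem and divide a numerator bound by a denominator bound---is a correct alternative to the paper's argument, and it rests on exactly the same two probabilistic inputs: the second-moment bound $\E_{\beta_{0}}\left[L_{\bsigma}(\beta_{0})^{2}\right]=O(a_{n}/n^{2})$ (Lemma 5.2 of \cite{bhattacharya2018inference}) and the truncated lower bound $\sum_{i}m_{i}(\bsigma)^{2}\one\{|m_{i}(\bsigma)|\leq K_{2}\}\geq \nu a_{n}$ with high probability (their Lemma 5.3). The paper, however, never establishes consistency as a separate step: on the good event $T_{n}$ it integrates the derivative lower bound $\mathcal{L}_{\bsigma}'(\beta,b)\geq \nu\frac{a_{n}}{n}\sech^{2}(\beta K_{2})+\frac{\Delta}{n}$ exactly between $\beta_{0}$ and $\hat\beta^{\p}$, obtaining a bound on $\left|\tanh(K_{2}\hat\beta^{\p})+\frac{K_{2}\Delta}{\nu a_{n}}\hat\beta^{\p}-\tanh(K_{2}\beta_{0})-\frac{K_{2}\Delta}{\nu a_{n}}\beta_{0}\right|$, and then inverts this through the dedicated Lemma \ref{lemma:tanhOp1}; this sidesteps both the localization of the MVT point $\tilde\beta$ and the uniform lower bound on $G'$ near $\beta_{0}$ that you flag as the main obstacle (the $\sech^{2}$ factor is integrated rather than bounded below), and it absorbs the ridge term $\Delta\beta/n$ as the linear term of that lemma. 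What your version buys is a more standard template; what the paper's buys is that no preliminary consistency argument is needed. Two loose ends you should repair: (1) your claim $\Delta=O(1/\vep)$ is not true in general---the smallest admissible value is $\frac{24}{\vep}\left\|\bm J_{n}^{2}\right\|_{1\rightarrow\infty}$, and likewise $\gamma\propto\zeta=8\left\|\bm J_{n}\right\|_{1\rightarrow\infty}$ may grow with $n$---so your consistency step actually requires $(\zeta\sqrt{\rho_{\vep,\delta}}+\vep\Delta)/(a_{n}\vep)\rightarrow 0$, i.e.\ that the claimed rate tends to zero; this is not a defect relative to the paper, whose own proof needs the same thing through the hypothesis $M_{n}\rightarrow\infty$ of Lemma \ref{lemma:tanhOp1}, but it should be stated rather than derived from the false $O(1/\vep)$ bound. (2) Since $\hat\beta^{\p}=\infty$ is possible when $G(0)>0$, you should note explicitly that on the event $G(\beta_{0}-\eta)<0<G(\beta_{0}+\eta)$ the root exists, is finite and unique, which is what licenses the MVT expansion.
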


\begin{proof}

We prove Theorem \ref{thm:upperbdddetailed} by following the techniques developed in \cite{bhattacharya2018inference}. For notational convinience define,
    \begin{align}\label{eq:defkndelta}
        k_{n,\delta}: = \dfrac{n^{2}\varepsilon^{2}}{\zeta^{2}\left(8\log(2/\delta) + 4\varepsilon\right) + \vep^2\Delta^2\beta_{0}^2} \text{ for all } \delta>0,\ k_{n,0}:= \dfrac{n^2\vep^2}{8\zeta^2 + \vep^2\Delta^2\beta_{0}^2}.
    \end{align}
    and consider,
    \begin{align*}
        s_{n,\delta}^{2}:= \dfrac{n^{2}k_{n,\delta}}{\left(\sqrt{a_{n}k_{n,\delta}} + n\right)^2} = 
        \begin{cases}
            \dfrac{1}{\frac{1}{\sqrt{a_{n}}} + \frac{\sqrt{\zeta^2(8\log(2/\delta) + 4\varepsilon) + \vep^2\Delta^2\beta_{0}^2}}{a_{n}\varepsilon}} & \text{ if }\delta>0\\
            \dfrac{1}{\frac{1}{\sqrt{a_{n}}} + \frac{\sqrt{8\zeta^2 + \vep^2\Delta^2\beta_{0}^2}}{a_{n}\varepsilon}}& \text{ if }\delta = 0
        \end{cases}.
    \end{align*}
    Further we will use $(\vep, 0)$ privacy in place of $\vep$-privacy for consistency in notation and $\lesssim_{\theta}$ to denote less than equality upto a constant depending on a parameter $\theta$.\\
    By definition it is easy to observe that,
    \begin{align}\label{eq:bddsn2}
        s_{n,\delta}^{2}\leq \min\left\{\frac{n^{2}}{a_{n}}, k_{n,\delta}.\right\}
    \end{align}
    Recall that by Algorithm \ref{PrIsing}, $\widehat{\beta}^{\texttt{priv}}$ is the solution to the equation,
    \begin{align*}
        \mathcal{L}_{\bsigma}(\beta,b) := L_{\bsigma}(\beta) + \frac{\Delta}{n}\beta + \frac{b}{n} = 0.
    \end{align*}
    Then for $\delta>0$ with $(\vep,\delta)$ privacy,
    \begin{align}\label{eq:L2-Lsigma-finite}
        \limsup_{n\rightarrow\infty}s_{n,\delta}^{2}\mathbb{E}_{\beta_{0}, b\sim N(0,\gamma^{2})}\left[\mathcal{L}_{\bsigma}(\beta_{0},b)^{2}\right]
        &\lesssim \limsup_{n\rightarrow\infty}s_{n,\delta}^{2}\mathbb{E}_{\beta_{0}}L_{\bsigma}(\beta_{0})^{2} + \frac{s_{n,\delta}^{2}}{n^{2}}\Delta^{2}\beta_{0}^{2} + \frac{s_{n,\delta}^{2}}{n^{2}}\mathbb{E}_{\mathrm{N}(0,\gamma^{2})}b^{2}\nonumber\\
        &\leq \limsup_{n\rightarrow\infty} \frac{n^{2}}{a_{n}}\mathbb{E}_{\beta_{0}}L_{\bsigma}(\beta_{0})^2 + 1 <\infty.
    \end{align}
    where the finiteness follows by \cite[Lemma 5.2]{bhattacharya2018inference}, the definition of $k_{n,\delta}$ from \eqref{eq:defkndelta}, $\gamma, \Delta$ from Algorithm \ref{PrIsing} and observing that,
    \begin{align}\label{eq:epsilondeltaprivbdd}
        \frac{s_{n,\delta}^{2}}{n^{2}}\Delta^{2}\beta_{0}^{2} + \frac{s_{n,\delta}^{2}}{n^{2}}\gamma^{2}\leq \frac{k_{n,\delta}}{n^{2}}\Delta^{2}\beta_{0}^{2} + \frac{k_{n,\delta}}{n^{2}}\gamma^{2}\leq \frac{\varepsilon^{2}\Delta^{2}\beta_{0}^2 + \zeta^{2}(8\log(2/\delta) + 4\varepsilon)}{\zeta^{2}\left(8\log(2/\delta) + 4\varepsilon\right) + \vep^2\Delta^2\beta_{0}^2}\leq 1.
    \end{align}
    where the first inequality follows from \eqref{eq:bddsn2}. Note that for $b\sim \mathrm{Lap}(0,2\zeta/\vep)$, $ \E\left[b^2\right] = 8(\zeta/\vep)^2.$ Then for $\delta = 0$ by a similar computation as in  \eqref{eq:L2-Lsigma-finite} and \eqref{eq:epsilondeltaprivbdd} we get,
    \begin{align}\label{eq:L2laplacefinite}
        \limsup_{n\rightarrow\infty}s_{n,\delta}^{2}\mathbb{E}_{\beta_{0}, b\sim \mathrm{Lap}(0,2\zeta/\vep)}\left[\mathcal{L}_{\bsigma}(\beta_{0},b)^{2}\right] < \limsup_{n\rightarrow\infty} \frac{n^{2}}{a_{n}}\mathbb{E}_{\beta_{0}}L_{\bsigma}(\beta_{0})^2 + 1<\infty.
    \end{align} 
    Fix $\delta\geq 0$ and fix $\xi>0$, then by Chebyshev inequality, \eqref{eq:L2-Lsigma-finite} and \eqref{eq:L2laplacefinite} we can choose $K_{1} = K_{1}(\xi)>0$ such that,
    \begin{align}\label{eq:Lemma5.2-equiv}
        \mathbb{P}\left(\left|\mathcal{L}_{\bsigma}(\beta_{0},b)\right|>\frac{K_{1}}{s_{n,\delta}}\right)\leq \frac{s_{n,\delta}^{2}}{K_{1}^2}\mathbb{E}\mathcal{L}_{\bsigma}(\beta_{0},b)^{2}\lesssim_{\beta_{0}}\frac{1}{K_{1}^2}< \xi.
    \end{align}
    By \cite[Lemma 5.3]{bhattacharya2018inference} it is easy to observe that there exists $\nu:=\nu(\xi)$ and $K_{2} = K_{2}(\nu,\xi)$ such that,
    \begin{align}\label{eq:Lemma5.3-equiv}
        \mathbb{P}_{\beta_{0}}\left(\sum_{i=1}^{n}m_{i}(\bsigma)^{2}\bm{1}\left\{|m_{i}(\bsigma)|\leq K_{2}\right\}\geq \nu a_{n}\right)\geq 1-\xi
    \end{align}
    for large enough $n$.
    Define,
    \begin{align*}
        T_{n}:= \left\{(\bsigma,b)\in \{+1,-1\}^{n}\times \mathbb{R}: |\mathcal{L}_{\bsigma}(\beta_{0},b)|\leq \frac{K_{1}}{s_{n,\delta}}, \sum_{i=1}^{n}m_{i}(\bsigma)^{2}\bm{1}\left\{|m_{i}(\bsigma)|\leq K_{2}\right\}\geq \nu a_{n}\right\}.
    \end{align*}
    Combining \eqref{eq:Lemma5.2-equiv} and \eqref{eq:Lemma5.3-equiv} and taking $n$ large enough we conclude that,
    \begin{align*}
        \mathbb{P}(T_{n})\geq 1-2\xi.
    \end{align*}
    Now choosing $(\bsigma,b)\in T_{n}$ and recalling that the parameter $\beta\geq 0$ shows,
    \begin{align}
        \mathcal{L}_{\bsigma}'(\beta,b):=\frac{\partial}{\partial\beta}\mathcal{L}_{\bsigma}(\beta,b) 
        & = \frac{1}{n}\sum_{i=1}^{n}m_{i}(\bsigma)^{2}\sech^{2}(\beta m_{i}(\bsigma)) + \frac{\Delta}{n}\nonumber\\
        & \geq \frac{1}{n}\sech^{2}(\beta K_{2})\sum_{i}^{n}m_{i}(\bsigma)^{2}\bm{1}\left\{|m_{i}(\bsigma)|\leq K_{2}\right\} + \frac{\Delta}{n}\nonumber\\
        & \geq \nu \frac{a_{n}}{n}\sech^{2}(\beta K_{2}) + \frac{\Delta}{n}.\label{eq:lowbddderivative}
    \end{align}
    Thus,
    \begin{align}
        \frac{K_{1}}{s_{n,\delta}}\geq |\mathcal{L}_{\bsigma}(\beta_{0},b)| 
        & = |\mathcal{L}_{\bsigma}(\beta_{0},b) - \mathcal{L}_{\bsigma}(\widehat{\beta}^{\texttt{priv}},b)|\nonumber\\
        & \geq \int_{\widehat{\beta}^{\texttt{priv}}\wedge \beta_{0}}^{\widehat{\beta}^{\texttt{priv}}\vee \beta_{0}}\mathcal{L}_{\sigma}'(\beta,b)d\beta\nonumber\\
        & \geq \left|\nu\frac{a_{n}}{K_{2}n}\tanh(K_{2}\widehat{\beta}^{\texttt{priv}}) + \frac{\Delta}{n}\widehat{\beta}^{\texttt{priv}} - \nu\frac{a_{n}}{K_{2}n}\tanh(K_{2}\beta_{0}) - \frac{\Delta}{n}\beta_{0}\right|\label{eq:lbbderapply}\\
        & = \nu\frac{a_{n}}{K_{2}n}\left|\tanh(K_{2}\widehat{\beta}^{\texttt{priv}}) + \frac{K_{2}\Delta}{\nu a_{n}}\widehat{\beta}^{\texttt{priv}} - \tanh(K_{2}\beta_{0}) - \frac{K_{2}\Delta}{\nu a_{n}}\beta_{0}\right|.\nonumber
    \end{align}
    where the inequality in \eqref{eq:lbbderapply} follows from \eqref{eq:lowbddderivative}. Now recalling our choice of $(\bsigma,b)\in T_{n}$ we conclude,
    \begin{align*}
        \mathbb{P}\left(\frac{a_{n}s_{n,\delta}}{n}\left|\tanh(K_{2}\widehat{\beta}^{\texttt{priv}}) + \frac{K_{2}\Delta}{\nu a_{n}}\widehat{\beta}^{\texttt{priv}} - \tanh(K_{2}\beta_{0}) - \frac{K_{2}\Delta}{\nu a_{n}}\beta_{0}\right|\geq\frac{K_{2}}{\nu K_{1}} \right)\leq 2\xi
    \end{align*}
    for large enough $n$. The proof is now complete by invoking Lemma \ref{lemma:tanhOp1}.

\end{proof}
%%%%%%%%%%%%%%%%%%%%%%%%%%%%%%%%%%%%%%%%%%%%%%%%%%%%%%%%%%%%%%%%%%%%%%%%%%%%%%%%%%%%%%%%%%%%%%%%%

\subsection{Proof of Theorem \ref{upperbound}}
Note that all the assumptions of Theorem \ref{thm:upperbdddetailed} are satisfied. By Algorithm \ref{PrIsing} the smallest permitted value of $\Delta$ is given by,
\begin{align}\label{eq:defDelta0}
    \Delta_{0} = \max_{j}\left\{\frac{24}{\vep n}\sum_{i=1}^{n}d_{i}\bmJ_{n}(i,j)\right\}.
\end{align}
Fix $1\leq j\leq n$. By the definition of $d_{i}, 1\leq i\leq n$ from Algorithm \ref{PrIsing} note that,
\begin{align*}
    \sum_{i=1}^{n}d_{i}\bmJ_{n}(i,j) = n\sum_{i=1}^{n}\sum_{k=1}^{n} \bmJ_{n}(i,k)\bmJ_{n}(i,j) = n\sum_{k=1}^{n}\sum_{i=1}^{n}\bmJ_{n}(k,i)\bmJ_{n}(i,j) = n\sum_{k=1}^{n}\bmJ_{n}^2(k,j).
\end{align*}
By \eqref{eq:defDelta0} note that,
\begin{align}\label{eq:formDelta0}
    \vep\Delta_{0} = 24\max_{j}\left\{\sum_{k=1}^{n}\bmJ_{n}^2(k,j)\right\} = 24\left\|\bm{J}_{n}^2\right\|_{1\rightarrow\infty}.
\end{align}
Now for $\zeta$ from Algorithm \ref{PrIsing} we have,
\begin{align}\label{eq:formzeta}
    \zeta = 8\max_{i}\left\{\frac{d_{i}}{n}\right\} = 8\max_{i}\left\{\sum_{j=1}^{n}\bmJ_{n}(i,j)\right\} = 8\max_{j}\left\{\sum_{i=1}^{n}\bmJ_{n}(i,j)\right\} = 8\left\|\bmJ_{n}\right\|_{1\rightarrow\infty}
\end{align}
where the penultimate equality follows since $\bmJ_{n}$ is symmetric. Now recalling $\rho_{\vep,\delta}$ from Theorem \ref{thm:upperbdddetailed}, \eqref{eq:formDelta0} and \eqref{eq:formzeta} shows,
\begin{align*}
    8\zeta^2\rho_{\vep,\delta} + \vep^2\Delta_{0}^2\beta_{0}^2 \lesssim \left\|\bmJ_{n}\right\|_{1\rightarrow\infty}^2\rho_{\vep,\delta} + \left\|\bm{J}_{n}^2\right\|_{1\rightarrow\infty}^2\beta_{0}^2\leq \max\left\{1, \left\|\bm{J}_{n}\right\|_{1\rightarrow\infty}^4\right\}\left(\rho_{\vep,\delta} + \beta_{0}^2\right).
\end{align*}
The result now follows from Theorem \ref{thm:upperbdddetailed}.

\section{Additional Experiments}
We report additional simulations to evaluate the cost of privacy. In Figure \ref{allinone} we generate Ising model synthetic outcomes on Erd\H{o}s-R\'enyi, HIV network and the Political Blogs network, and plot the MSE of the private-MPLE estimates across a wide range of $\varepsilon$, and in the regimes of $\beta>1$ and $\beta<1$. The cost, quantified by the MSE shows a decreasing trend with $\varepsilon$, with the difference in regimes being stark in the Erd\H{o}s-R\'enyi network.
\begin{figure}[!ht]
    \centering
    \includegraphics*[width = 0.45\textwidth]{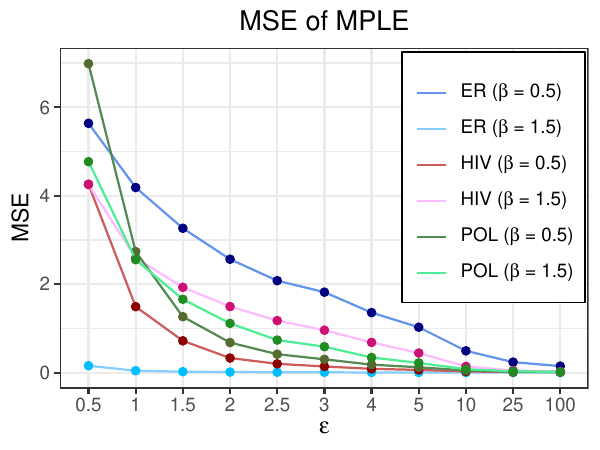}
    \caption{MSE of \texttt{PrIsing} estimates across $\varepsilon$ for all networks in the paper}
    \label{allinone}
\end{figure}

Next, we compare the privacy costs in a neighborhood of the estimated $\hat\beta$s. As noted in Sections \ref{HIVsubsect} and Section \ref{sec:Polblogs} corresponding beta-hat turns out to be 1.8 and 2.85 respectively. As before, we generate Ising model synthetic outcomes with beta in a range around $\hat\beta$, and estimate $\hat{\beta}^{\p}$ $500$ times to produce MSE values. We plot the results varying across epsilon, and plot the results in Figure \ref{fig:S34}(a) for HIV network and Figure \ref{fig:S34}(b) for political blogs network. 

\begin{figure}[!ht]
    \centering
    \begin{subfigure}[c]{0.45\textwidth}
       \includegraphics[width=1\linewidth]{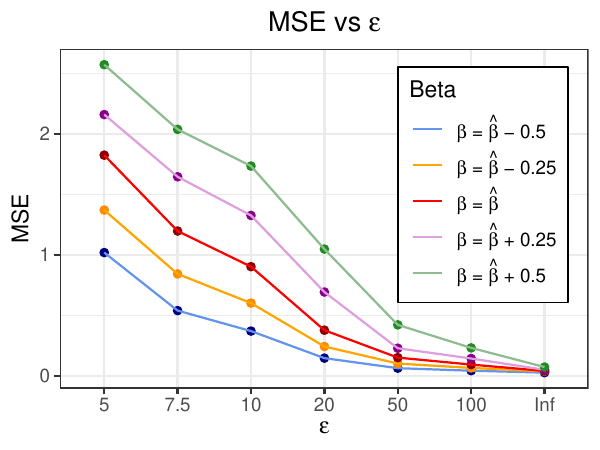} 
          \caption*{\small{(a)}} 
    \end{subfigure} 
    \begin{subfigure}[c]{0.45\textwidth}
       \includegraphics[width=1\linewidth]{Experiments/Data/HIV_Transmission_Network/Results/MSEvsEpsilon/Social_Network/MSEvsEpsilonSocial.pdf} 
          \caption*{\small{(b)}} 
    \end{subfigure} 
    \caption{MSE of \texttt{PrIsing} estimates across a range of $\beta$ values around $\hat\beta$ using Ising Models on (a) HIV network and (b) Political Blogs network.}
    \label{fig:S34}
\end{figure} 

The results show a decreasing trend in MSE with increasing epsilon, re-ensuring that the MSE decreases as the privacy guarantee becomes weaker.

\section{Technical Details}
In this section we provide technical lemmas needed for completing the proof of our results.
\begin{lemma}\label{lemma:tanhOp1}
Consider a sequence of random variables $\{X_{n}:n\geq 1\}$ and suppose that for every $\xi>0$ there exists $K_{1}(\xi), K_{2}(\xi), K_{3}(\xi)>0$ such that, 
\begin{align}\label{eq:Mnconvg}
    \P\left(M_{n}\left|\tanh(K_{1}(\xi)X_{n}) + K_{2}(\xi)t_{n}X_{n} - \tanh(K_{1}(\xi)c) - K_{2}(\xi)t_{n}c\right|>K_{3}(\xi)\right)\leq \xi
\end{align}
for all $n\geq n_{0}(\xi)$, where $c>0$ is a constant, $t_{n}>0\ \forall n\geq 1$ and $M_{n}\rightarrow\infty$ as $n\rightarrow\infty$. Then,
\begin{align*}
    M_{n}\left|X_{n} - c\right| = O_{p}(1)
\end{align*}
\begin{proof}
Observe that,
\begin{align*}
    \bigg|\tanh(K_{1}(\xi)X_{n}) + K_{2}(\xi)t_{n}X_{n}
    & - \tanh(K_{1}(\xi)c) - K_{2}(\xi)t_{n}c\bigg|\\
    & = \bigg|\tanh(K_{1}(\xi)X_{n}) - \tanh(K_{1}(\xi)c)\bigg| + \bigg|K_{2}(\xi)t_{n}X_{n} - K_{2}(\xi)t_{n}c\bigg|
\end{align*}
Then by \eqref{eq:Mnconvg} we get,
\begin{align*}
    \P\bigg(M_{n}
    &\bigg|\tanh(K_{1}(\xi)X_{n}) - \tanh(K_{1}(\xi)c)\bigg|>K_{3}(\xi)\bigg)\\
    & \leq \P\left(M_{n}\left|\tanh(K_{1}(\xi)X_{n}) + K_{2}(\xi)t_{n}X_{n} - \tanh(K_{1}(\xi)c) - K_{2}(\xi)t_{n}c\right|>K_{3}(\xi)\right)\leq \xi
\end{align*}
for all $n\geq n_{0}(\xi)$. Now for fixed $\xi>0$ and using the mean value theorem,
\begin{align}\label{eq:useMVT}
    M_{n}|X_{n} - c|
    & = \frac{M_{n}}{K_{1}(\xi)}\left|\tanh^{-1}(\tanh(K_{1}(\xi)X_{n})) - \tanh^{-1}(\tanh(K_{1}(\xi)c))\right|\nonumber\\
    &\leq \frac{M_{n}}{K_{1}(\xi)}\left|\dfrac{\tanh(K_{1}(\xi)X_{n}) - \tanh(K_{1}(\xi)c)}{1-\zeta_{\xi}^2}\right|
\end{align}
where $\min\left\{\tanh(K_{1}(\xi)X_{n}),\tanh(K_{1}(\xi)c)\right\}\leq \zeta_{\xi}\leq \max\left\{\tanh(K_{1}(\xi)X_{n}),\tanh(K_{1}(\xi)c)\right\}$. By definition,
\begin{align*}
    |1-\zeta_{\xi}^2| = 1-\zeta_{\xi}^2\geq 1-|\zeta_{\xi}|\geq 1-|\tanh(K_{1}(\xi)c)| - |\tanh(K_{1}(\xi)X_{n}) - \tanh(K_{1}(\xi)c)|
\end{align*} 
Since $M_{n}\rightarrow\infty$, then there exists $n_{1}(\xi)>n_{0}(\xi)$ such that for all $n\geq n_{1}(\xi)$, 
\begin{align}\label{eq:relK4K3}
    \dfrac{K_{3}(\xi)}{M_{n}}\leq K_{4}(\xi):=\frac{1}{2}(1-|\tanh(K_{1}(\xi)c)|)
\end{align}
Note that on the event $\left|\tanh(K_{1}(\xi)X_{n}) - \tanh(K_{1}(\xi)c)\right|\leq K_{4}(\xi)$ with \eqref{eq:relK4K3}, we have,
\begin{align*}
    |1-\zeta_{\xi}^2| \geq K_{4}(\xi).
\end{align*}
Hence recalling \eqref{eq:useMVT}, on the event $\left|\tanh(K_{1}(\xi)X_{n}) - \tanh(K_{1}(\xi)c)\right|\leq K_{4}(\xi)$ we get,
\begin{align*}
    M_{n}|X_{n} - c|\leq \frac{M_{n}}{K_{1}(\xi)K_{4}(\xi)}\left|\tanh(K_{1}(\xi)X_{n}) - \tanh(K_{1}(\xi)c)\right|
\end{align*}
Now choosing $P(\xi) = \frac{K_{3}(\xi)}{K_{1}(\xi)K_{4}(\xi)}$ shows,
\begin{align*}
    \P\left(M_{n}\left|X_{n} - c\right|>P(\xi)\right)
    &\leq \P\left(M_{n}\left|X_{n} - c\right|>P(\xi), \left|\tanh(K_{1}(\xi)X_{n}) - \tanh(K_{1}(\xi)c)\right|\leq K_{4}(\xi)\right)\\
    &\hspace{20pt} + \P\left(\left|\tanh(K_{1}(\xi)X_{n}) - \tanh(K_{1}(\xi)c)\right|> K_{4}(\xi)\right)\\
    &\leq 2\P\left(\left|\tanh(K_{1}(\xi)X_{n}) - \tanh(K_{1}(\xi)c)\right|> \frac{K_{3}(\xi)}{M_{n}}\right)\leq 2\xi
\end{align*}
for all $n\geq n_{1}(\xi)$, which completes the proof.
\end{proof}
\end{lemma}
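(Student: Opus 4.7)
The plan is to exploit the strict monotonicity of $g_n(x) := \tanh(K_1(\xi) x) + K_2(\xi) t_n x$ together with a controlled inversion of $\tanh$ away from its saturation values $\pm 1$.

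First, I would note that both $\tanh(K_1 \cdot)$ and the linear map $K_2 t_n \cdot$ are strictly increasing, so the two summands $\tanh(K_1 X_n) - \tanh(K_1 c)$ and $K_2 t_n (X_n - c)$ always share the same sign. Their magnitudes therefore add, giving
\begin{align*}
\bigl|g_n(X_n) - g_n(c)\bigr| = \bigl|\tanh(K_1 X_n) - \tanh(K_1 c)\bigr| + K_2 t_n |X_n - c|.
\end{align*}
In particular the tanh-difference is dominated by the full quantity. Combined with the hypothesis, this produces, for every $\xi>0$ and all $n \ge n_0(\xi)$, an event $E_\xi$ of probability at least $1-\xi$ on which $|\tanh(K_1 X_n) - \tanh(K_1 c)| \le K_3(\xi)/M_n$.

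Next, since $M_n \to \infty$, for all sufficiently large $n$ the bound $K_3(\xi)/M_n$ falls below $K_4(\xi) := \tfrac{1}{2}(1 - |\tanh(K_1 c)|)$. On $E_\xi$ this forces the segment joining $\tanh(K_1 X_n)$ and $\tanh(K_1 c)$ into a compact subinterval of $(-1,1)$ whose endpoints satisfy $|\cdot| \le \tfrac{1}{2}(1 + |\tanh(K_1 c)|)$. Applying the mean value theorem to $\tanh^{-1}$, whose derivative is $1/(1-u^2)$, with $1-u^2 \ge K_4(\xi)$ throughout this segment, yields
\begin{align*}
K_1 |X_n - c| \le \frac{1}{K_4(\xi)} \bigl|\tanh(K_1 X_n) - \tanh(K_1 c)\bigr| \le \frac{K_3(\xi)}{K_4(\xi) M_n}.
\end{align*}
Choosing $P(\xi) := K_3(\xi)/(K_1(\xi) K_4(\xi))$, I would decompose $\{M_n |X_n - c| > P(\xi)\}$ over $E_\xi$ and its complement; on $E_\xi$ the event is empty by the above display, and $E_\xi^c$ has mass at most $\xi$ (up to an absorbable factor of $2$), delivering tightness.

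The main obstacle is the saturation of $\tanh$: its inverse has derivative $1/(1-u^2)$, which explodes as $|u|\to 1$, so one cannot simply invert globally. The assumption $M_n\to\infty$ is doing the real work here, since it is what pins $\tanh(K_1 X_n)$ inside a compact subinterval of $(-1,1)$ and thereby controls the Lipschitz constant of $\tanh^{-1}$. The monotonicity trick at the outset is likewise essential: a direct bound on the linear term $K_2 t_n (X_n - c)$ would be useless whenever $t_n$ is allowed to vanish (as it can in the intended applications), so one must isolate the bounded-range $\tanh$ piece to get any inversion at all.
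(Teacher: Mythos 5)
Your proposal is correct and follows essentially the same route as the paper's proof: isolate the $\tanh$ difference via the same-sign/monotonicity observation, invert $\tanh^{-1}$ by the mean value theorem with the same constant $K_{4}(\xi)=\tfrac{1}{2}(1-|\tanh(K_{1}(\xi)c)|)$ (using $M_{n}\to\infty$ to keep the arguments away from saturation), and conclude with the same threshold $P(\xi)=K_{3}(\xi)/(K_{1}(\xi)K_{4}(\xi))$. Your final event decomposition is marginally tidier (giving $\xi$ rather than the paper's $2\xi$), but the argument is the same.
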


\end{document}